\soulregister{\cite}{7}
\soulregister{\ref}{7}
\newtheorem{proposition}{Proposition}
\begin{document}
	
	\title{\huge{Secure and Green RSMA-Assisted Heterogeneous ISAC}
}

{\Large{\author{\IEEEauthorblockN{Xudong Li, Rugui Yao, Theodoros A. Tsiftsis, and Alexandros-Apostolos A. Boulogeorgos}}}}
		%
	
	\maketitle
	
\begin{abstract}
This paper investigates sensing, communication, security, and energy efficiency of the heterogeneous integrated sensing and communication networks under challenging operational conditions. We focus on scenarios in which communication performance, security, and sensing accuracy are degraded by interference, eavesdropping, and imperfect channel state information.
To this end, we analyze communication and sensing signals within ISAC framework as well as the communication signals of a multicast network based on \emph{rate-splitting multiple access} (RSMA). Then, sensing signal-to-cluster-plus-noise ratio, communication rate, security rate, and \emph{security energy efficiency} (SEE) are evaluated.
To simultaneously enhance these system performances, we propose a targeted optimization framework aimed at maximizing SEE. This framework characterizes the sensing-security trade-off by jointly optimizing the transmit \emph{beamforming} (BF) vectors and the echo BF vector to construct green interference using the echo signal, as well as common and private streams generated by RSMA. Particularly, the joint design improves the security rate and reduces power consumption, thereby enabling a higher SEE. Given the non-convex nature of the optimization problem, we present an alternative approach that leverages Taylor series expansion, majorization-minimization, semi-definite programming, and successive convex approximation techniques. Specifically, we decompose the original non-convex and intractable optimization problem into three simplified sub-optimization problems, which are iteratively solved using an alternating optimization strategy.
Simulations provide comparisons with state-of-the-art schemes, highlighting the superior efficiency, robustness, and scalability of the proposed joint multi-BF optimization scheme based on RSMA and green interference in improving system performances.
\end{abstract}

\begin{IEEEkeywords}
	Beamforming (BF), green interference, heterogeneous integrated sensing and communication (HISAC), rate-splitting multiple access (RSMA), security energy efficiency (SEE).
\end{IEEEkeywords}

\IEEEpeerreviewmaketitle

\section{Introduction}

\IEEEPARstart{W}{ith} the rapid promotion of \emph{fifth generation} (5G), \emph{beyond} (B5G) and even \emph{sixth generation} (6G) networks, spectrum scarcity becomes the key limited factor of next-generation bandwidth-hungry applications \cite{ISAC_PLS_6_re}. Given that the current frequency band utilization is in the range of 15-20\% \cite{CR_data}, it is necessary to explore emerging technologies that refine spectrum utilization and mitigate radio resource competitions, such as \emph{integrated sensing and communication} (ISAC) \cite{ISAC_PLS_4,ISAC_PLS_5,ISAC_PLS_52}. 

The joint or independent optimization of sensing and communication in the ISAC has emerged as a key research focus. \cite{ISAC_PLS_15_r,ISAC_PLS_16_r,ISAC_PLS_7}. The authors of \cite{ISAC_PLS_30} studied the single-static sensing performance of the multi-target massive \emph{massive input massive output} (MIMO)-ISAC systems, and minimized the sum of \emph{Cramer-Rao lower bounds} (CRLBs) of target arrival directions under a communication rate constraint. The authors stated that their scheme achieved near-optimal performance with less complexity through a single optimization of the sensing signal.
In \cite{ISAC_PLS_19}, the authors investigated a mobile-antenna-assisted ISAC system, aiming to improve the communication rate and echo signal \emph{signal-to-cluster-plus-noise ratio} (SCNR) by jointly optimizing the antenna coefficient and antenna position. An ISAC network assisted by a fluid antenna was designed in \cite{ISAC_PLS_24}. With the position and waveform of the fluid antenna jointly optimized, the sum communication rate of all users was improved. The authors of \cite{ISAC_PLS_20} presented an ISAC network in which \emph{unmanned aerial vehicle} (UAV) acted as a \emph{base station} (BS). To achieve a balance between sensing and communication, the position and transmit power of the UAV were optimized, and then the communication rate and CRLB of target sensing accuracy were maximized. 
Li \emph{et al.} analyzed an ISAC network in which a mobile UAV served as the sensing target. Under the communication \emph{quality of service} (QoS) and transmit power of the BS constraints, transmit signal BF vector and target allocation were optimized to improve the echo signal \emph{signal-to-interference-plus-noise ratio} (SINR) \cite{ISAC_PLS_23}. In \cite{ISAC_PLS_21}, the author studied ISAC composed of multiple BSs and multiple users acting as targets and communication users. 

 Considering that the overall composition in ISAC is relatively complex and there is a non-negligible interference between the target sensing and the multicast communications, the scientific community turned its attention towards a variety of multiple access schemes to attain interference mitigation in ISAC \cite{MA_ISAC,Inter_ISAC}. The authors of \cite{ISAC_PLS_42} focused on a \emph{space division multiple access} (SDMA) scheme that used linear precoding to distinguish users in a spatial domain, relying entirely on treating any remaining multi-user interference as noise. The communication performance gain of the ISAC based on SDMA was evaluated in \cite{ISAC_PLS_43}. In contrast to SDMA, \emph{non-orthogonal multiple access} (NOMA) works with superimposing coding at the transmitter and \emph{successive interference cancellation} (SIC) coding at the receiver. NOMA superimposes users in the power domain, and forces users with better channel conditions to perform complete decoding through user grouping and sorting \cite{ISAC_PLS_42} to eliminate interference caused by other users. Different from NOMA, \emph{orthogonal multiple access} (OMA) assigns one resource to one user, resulting in lower resource utilization. 
The authors of \cite{ISAC_PLS_44} assessed the OMA-empowered and NOMA-empowered performance gains of communication and sensing of the semi-ISAC network based on the traversal rate and the traversal estimation of information rate. An uplink transmission scheme was articulated in \cite{ISAC_PLS_45} for NOMA-ISAC system to mitigate mutual interference between sensing and communication signals, and enhance communication convergence rate, reliability, and sensing accuracy. In \cite{ISAC_PLS_46_r}, the authors documented a joint optimization scheme of transmit signal BF, NOMA transmission time, and target sensing scheduling to maximize the sensing efficiency of ISAC systems, while ensuring a high communication QoS. 
A joint precoding optimization problem based on NOMA was solved in \cite{ISAC_PLS_47}, which maximized the security rate of multi-user through \emph{artificial noise} (AN), and achieved secure transmission, while satisfying the sensing performance constraint. 

Although studies on SDMA and NOMA to improve the ISAC performance is gradually deepening, there are some extremes to conventional multi-access architectures, such as SDMA and NOMA. Specifically, SDMA treats interference entirely as noise, seriously reducing the reliability. Instead, NOMA decodes interference one by one, implying that the effectiveness is hard to guarantee. Above shortcomings and deficiencies urge us to find a new scheme like \emph{rate-splitting multiple access} (RSMA)  \cite{ISAC_PLS_14}, adopting rate splitting based on linear precoding and SIC. As a consequence, RSMA decodes some interference and treats the remaining as noise, fully absorbing advantages of both SDMA and NOMA, and achieves high reliability and high effectiveness. Under the constraints of data rate and transmit power budget, RSMA structure and parameters were designed in \cite{ISAC_PLS_48} to minimize the CRLB of sensing response matrix at radar receivers. The authors of \cite{ISAC_PLS_49} presented an indicative example of an RSMA-ISAC waveform design that jointly optimized the minimum fairness rate among communication users and the CRLB of target detection under power constraints.

Coexistence of sensing and communication broadens the prospects for next-generation communication systems, while increasing the energy consumption. This necessitates the development of green ISAC systems simultaneously. In this direction, the authors of \cite{ISAC_PLS_29} introduced a power consumption minimization policy for the near-field ISAC system. In particular, the transmit signal BF vector was optimized to minimize network power consumption under the constraints of communication SINR, sensing target transmit beam pattern gain, and interference power. For \emph{intelligent reflecting surface} (IRS)-ISAC systems, the authors of \cite{ISAC_PLS_25_re,ISAC_PLS_36_re} maximized the \emph{energy efficiency} (EE) by jointly optimizing the transmit signal BF vector, the IRS reflection coefficient matrix, and the IRS deployment location. Energy-saving BF design of ISAC systems that aimed to maximize the EE by appropriately designing transmission waveforms in multi-user communication and target estimation scenarios was documented in \cite{ISAC_PLS_37} and \cite{ISAC_PLS_38}. 
In a multi-BS ISAC network, the energy consumption was reduced due to optimum task allocation, beam scheduling and transmit power control \cite{ISAC_PLS_40}. 

Additionally, due to the inherent open nature of downlink data transmission and broadcast mechanism, as well as the resource sharing between perception and communication of the ISAC network, it is vulnerable to security threats like eavesdropping and intercepting \cite{Intro_PLS_ISAC}. Consequently, it is of great significance and urgency to carry out researches on \emph{physical layer security} (PLS) in ISAC networks \cite{ISAC_PLS_50}. In an IRS-ISAC network, the authors of \cite{ISAC_PLS_27} maximized the minimum communication rate by optimizing the transmit BF vector, the receive BF vector, and the IRS reflection coefficient matrix under the constraints of echo signal power and security rate. For the same system model, BF was designed to maximize the minimum weighted beam pattern gain under security rate and transmit power constraints \cite{ISAC_PLS_33}. 
The authors of \cite{ISAC_PLS_28} focused on an ISAC network in which UAVs served as BSs to provided downlink data transmission for multiple users, sense and interfere with the \emph{eavesdropper} (Eve) to maximize security sum rate. 
The authors of \cite{ISAC_PLS_32} used neural networks to optimize the transmit signal precoders to minimize the maximum SINR of the Eve. In a UAV-IRS-ISAC system, the DRL framework was employed to optimize the transmit signal BF vector and the coefficient matrix of the IRS loaded by a UAV to maximize the security sum rate \cite{ISAC_PLS_34}. In \cite{ISAC_PLS_35}, NOMA and AN were adopted to jointly optimize the radar correlation and transmit signal BF vector to maximize echo signal power. 

To sum up, \emph{state-of-the-art} (SOTA) ISAC works \cite{ISAC_PLS_23,ISAC_PLS_30,ISAC_PLS_33,ISAC_PLS_35,ISAC_PLS_48} improved sensing of the ISAC system by optimizing echo signal power, SINR, or target detection CRLB. In \cite{ISAC_PLS_25_re,ISAC_PLS_27,ISAC_PLS_28,ISAC_PLS_32,ISAC_PLS_34,ISAC_PLS_47}, the security or communication QoS was enhanced by optimizing the communication sum rate, security coordination rate, security rate, or eavesdropping SINR. Optimization frameworks for jointly enhancing sensing and communication were articulated in \cite{ISAC_PLS_19,ISAC_PLS_20,ISAC_PLS_21,ISAC_PLS_44,ISAC_PLS_45,ISAC_PLS_49}. In \cite{ISAC_PLS_29,ISAC_PLS_36_re,ISAC_PLS_37,ISAC_PLS_38,ISAC_PLS_40}, the objective was to refine the EE. From the above, it gets obvious that joint optimization of multiple performances like sensing, communication, security, and EE of the ISAC has not been extensively investigated.

In PLS-ISAC, the dynamic change of channel state and the non-cooperative characteristics of unauthenticated Eve make the acquisition of the perfect CSI extremely difficult, while \cite{ISAC_PLS_28,ISAC_PLS_33,ISAC_PLS_34,ISAC_PLS_47} assume that Eve's perfect CSI can be ascertained. Undoubtedly, the assumption of perfect CSI provides tractability allow the derivation of performance bound with the cost accuracy and applicability as it. 
The aforementioned contributions on the PLS-ISAC \cite{ISAC_PLS_27,ISAC_PLS_28,ISAC_PLS_32,ISAC_PLS_33,ISAC_PLS_34,ISAC_PLS_35} take insufficient advantage of the inherent interference in networks. From the perspective of security, AN improves the security rate. Meanwhile, it results in notable power overhead and computational complexity increase.

Based on the aforementioned discussions, this paper focuses on the heterogeneous ISAC (i.e., ISAC network and the multicast communication network coexist), and provides a detailed and comparative analysis on several representative works summarized in Table \ref{work_com}.
\begin{table*}[htbp]
	\centering
	\caption{\label{work_com}Comparison on several representative works.}
	\scalebox{0.88}{
	\begin{tabular}{c|c|c|c|c|c}
		\hline\hline 
		\textbf{Item} & \textbf{Scheme} & \textbf{Network architecture} & \textbf{Security-aware} & \textbf{\makecell{Targeted optimization}}& \textbf{Feature}\\ 	\hline\hline
		This paper	& RSMA  &HISAC 
		with two BSs
		&\Checkmark & SEE	& \makecell{Good and balanced overall performance \\ in SEE, robustness, and scalability.}  \\ \hline
		[17]	& SDMA  &Single-cell BS &\XSolid & Communication rate	& \makecell{Efficient spectrum utilization.}  \\ \hline
		[19]	& OMA  &Single-cell ISAC &\XSolid & Communication rate	& \makecell{Inefficient spectrum utilization.}  \\ \hline
		[21] & NOMA & Single-cell ISAC
		&\XSolid & \makecell{Sensing efficiency} & \makecell{Medium performance, \\
			but limited in user scalability.} \\ 	\hline
		[24]	& RSMA & Single-cell ISAC
		&\XSolid & \makecell{Sensing CRB}  & \makecell{Good communication, but the \\security mechanism is not perfect.}  \\	\hline
		[35]	& N/A & \makecell{Single-cell ISAC
		with RIS}
		&\Checkmark & \makecell{Minimum weighted \\
		beampattern gain}
		 & \makecell{Excellent sensing, but hardware \\requirements and complexity are highest.} \\ 	\hline\hline 
	\end{tabular} }
\end{table*}
The proposed heterogeneous ISAC architecture with interfering systems reflects real-world deployment scenarios where multiple communication and sensing systems inevitably coexist and interact. On the one hand, with the scarcity of spectrum resources, multiple systems must operate in overlapping frequency bands, creating inherent interference scenarios.  On the other hand, current 5G-advanced and next generation 6G networks will deploy ISAC capabilities alongside existing communication infrastructure, creating the exact heterogeneous scenario we study and investigate.  Separate dedicated networks for sensing and communication are economically prohibitive, making integrated heterogeneous architectures the practical solution. Different from NOMA, SDMA, OMA, RSMA provides fundamental advantages for the heterogeneous ISAC scenario.  RSMA operates based on the principle of rate-splitting and partial interference decoding, which has been shown to achieve the capacity region of interference channels in certain settings.  Owing to the structure of RSMA, part of the interference is decoded (common stream) while the remainder is treated as noise.  It is less sensitive to imperfect CSI compared to NOMA, SDMA, and OMA, and adapts naturally to heterogeneous QoS requirements. For the considered heterogeneous ISAC, RSMA provides unique benefits for physical layer security. For green interference construction, the common stream in RSMA naturally creates structured interference that can be optimized to degrade eavesdropper performance without requiring additional power allocation, maintains legitimate user quality through intelligent rate-splitting, and enhances SEE by leveraging existing communication signals rather than dedicated jamming.

We construct green interference and articulate a joint BF optimization scheme based on RSMA in order to jointly enhance sensing SCNR, security rate, and SEE defined as the security data transmission rate achieved per unit of power.
In more detail, our contributions can be summarized as follows:
\begin{enumerate} 
	\item We introduce the \emph{heterogeneous ISAC} (HISAC) architecture supporting both ISAC and multicast communications, which increases the utilization of limited spectrum resources and improves performances despite adverse conditions such as interference, eavesdropping, and imperfect CSI of target, Bob, and Eve are confronted. Building upon this, we quantify sensing, communication, security, and power consumption through echo signal SCNR, security rate, and SEE. 
	
	\item We propose an optimization framework designed to maximize the SEE, and study the trade-off between sensing SCNR and security rate by optimizing both the transmit BF vectors and the echo BF vector. Our approach leverages RSMA and ISAC signals to generate green interference from multicast communication signals. This green interference is strategically designed to minimize interference for Bob while increasing interference for Eve; thereby, enhancing security rate. Unlike AN, green interference is an inherent component of the system's energy, eliminating the need for additional power consumption and computational complexity. Hence, this approach contributes to the overall improvement of SEE.
	
	\item We recognize that the original optimization objective and its associated constraints are inherently non-convex. To address this challenge, we propose an optimization algorithm that leverages Taylor series expansion, \emph{majorization-minimization} (MM), \emph{semi-definite programming} (SDP), and \emph{successive convex approximation} (SCA). This approach systematically transforms the originally non-convex and intractable problem into three more manageable sub-optimization problems. By employing an alternating iteration strategy, our method effectively converges toward a solution for the original optimization problem.
	
	
\end{enumerate}
\begin{figure}[t]
	
	\centering
	\includegraphics[width=8.625cm,height=4.6cm]{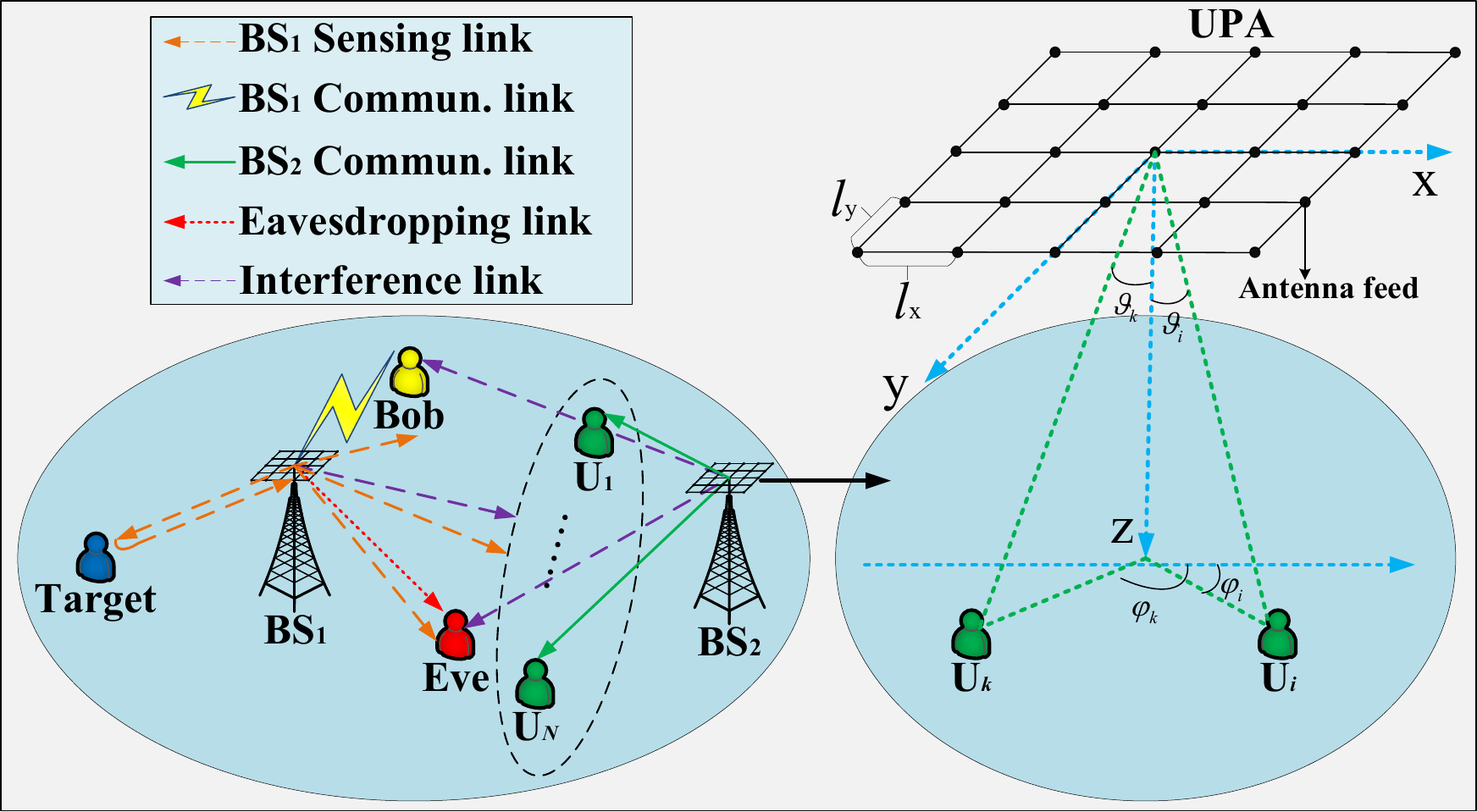}
	\caption{HISAC architecture with interference, eavesdropping, as well as imperfect CSI.}\label{fig_system_model} 
\end{figure}

The remainder of the paper is organized as follows: The channel model, heterogeneous ISAC architecture, and signal models accompanied by performance metrics are provided in Section II. Section III formulates the optimization problem, and present the corresponding solutions. Numerical results and simulations are provided in Section IV. Section V concludes this paper by summarizing its main message and key remarks.

\textbf{Notations:} 
Matrices and vectors are denoted as uppercase boldface and lowercase boldface, respectively. ${\left(  \cdot  \right)^{\rm{H}}}$, ${\mathop{\rm Tr}\nolimits} \left(  \cdot  \right)$, ${\mathop{\rm rank}\nolimits} \left(  \cdot  \right)$, and ${\left\|  \cdot  \right\|_2}$ are the Hermitian transpose operation, trace operation, rank operation, and 2-norm operation. ${\mathbb{C}^{x \times y}}$ stands for the 2-dimension complex space. ${{\bf{I}}_{Z \times Z}}$ represents the $Z \times Z$ identity matrix, ${{\bf{I}}_{{Z \times 1}}}$ represents the $\left( {{Z} \times 1} \right)$-dimension unit vector, $ \otimes $ denotes the Kroneker product of two matrices, $\mathcal{O}$ is computational complexity, and ${\left[ x \right]^ + } = \max \left\{ {0,x} \right\}$.

\section{System, Channel, and Signal Models}
To facilitate the formulation and solution of the optimization problem, we describe the architecture, Rician-shadowed fading channel, legitimate communication signal and unauthenticated eavesdropping signal in the ISAC, the RSMA-based multicast communication signal, and the sensing signal in the ISAC.

\subsection{System Model}
As shown in Fig. \ref{fig_system_model}, in the heterogeneous ISAC architecture, there are two sub-networks, namely ISAC as the primary network and multicast communications as the secondary network. 
In the primary network, $\rm BS_1$ is equipped with $M_1$ antennas provides the downlink data transmission service to Bob while sensing the target's location status information. Meanwhile, Eve, an unauthorized external eavesdropper equipped with a single antenna, attempts to wiretap confidential information that $\rm BS_1$ transmits to Bob.
In the secondary network, $\rm BS_2$ equipped with $M_2$ antennas employs the RSMA scheme to provide highly reliable and low-interference downlink data transmission for $N$ users. Clearly, in the heterogeneous ISAC, the signal transmitted by $\rm BS_1$ interferes with $N$ communication users' received signals, while the signal transmitted by $\rm BS_2$ affects both Bob's and Eve's signal reception.

\subsection{Channel Model}
In the primary network, a portion of the \emph{line of sight} (LoS) component is refracted and scattered by buildings and trees, forming the \emph{non LoS} (NLoS) component that coexists with the LoS component.  Therefore, the downlink channel can be modeled as the superposition of a predominant LoS component and a sparse set of single-bounce
NLoS components. This situation can be accurately characterized by the Rician-shadowed fading channel model in \cite{ISAC_PLS_53}, which is taken into account to capture the statistical characteristics of communication and sensing. Specifically, we have
\begin{equation}\label{channel_model}
\begin{array}{l}
	{\bf{h}} = \sqrt {\rho \left( {{\vartheta _0},{\varphi _0}} \right)} {\alpha _{\rm{0}}}{{\bf{u}}_{\rm{h}}}\left( {{\vartheta _0},{\varphi _0}} \right) \otimes {{\bf{u}}_{\rm{e}}}\left( {{\vartheta _0},{\varphi _0}} \right) + \\ \displaystyle
	\frac{1}{{\sqrt T }}\sum\nolimits_{t = 1}^T {\sqrt {\rho \left( {{\vartheta _t},{\varphi _t}} \right)} {\alpha _t}{{\bf{u}}_{\rm{h}}}\left( {{\vartheta _t},{\varphi _t}} \right) \otimes {{\bf{u}}_{\rm{e}}}\left( {{\vartheta _t},{\varphi _t}} \right)} ,
\end{array}
\end{equation}
where $\rho\left( {\vartheta ,\varphi } \right)$ is the antenna directivity pattern, ${{\bf{u}}_{\rm{h}}}\left( {\vartheta ,\varphi } \right)$ and ${{\bf{u}}_{\rm{e}}}\left( {\vartheta ,\varphi } \right)$ are horizon and elevation steering vectors, respectively, $\alpha_0$ and $\alpha_t$ are path-loss parameters of the LoS and the $t$-th NLoS components, $\varphi $ and $\vartheta $ are the azimuth and elevation angles of departure, respectively, $T$ is the number of NLoS components, $l_{\rm x}$ and $l_{\rm y}$ are distances between two adjacent antenna feeds of horizon direction and elevation direction, and ${M_1} = {M_{{\rm{1,1}}}} \times {M_{{\rm{1,2}}}}$.
In what follows, we point out that  unless otherwise specified, all channels considered follow Rician-shadowed distributions by default. While the Nakagami-$m$ fading model offers a versatile alternative, the Rician-shadowed fading is chosen for its physical accuracy in characterizing the dominant but potentially shadowed LoS paths critical for sensing and primary communication performance.

\subsection{Communication Signal Model}
For $\rm BS_1$, let ${x_{{\rm{bo}}}}$ and ${x_{{\rm{ta}}}}$ represent the communication signal transmitted to Bob and the sensing signal transmitted to target, respectively. ${{\bf{w}}_{{\rm{bo}}}} \in {\mathbb{C}^{{M_1} \times 1}}$ and ${{\bf{w}}_{{\rm{ta}}}} \in {\mathbb{C}^{{M_1} \times 1}}$ are the BF vectors of ${x_{{\rm{bo}}}}$ and ${x_{{\rm{ta}}}}$, respectively. Before transmitting ${x_{{\rm{bo}}}}$ and ${x_{{\rm{ta}}}}$, $\rm BS_1$ superimposes ${x_{{\rm{bo}}}}$ and ${x_{{\rm{ta}}}}$ together. Thus, the final signal transmitted by $\rm BS_1$ can be expressed as
\begin{equation}\label{B1_signal}
{{\bf{x}}_1} = {{\bf{w}}_{{\rm{bo}}}}{x_{{\rm{bo}}}} + {{\bf{w}}_{{\rm{ta}}}}{x_{{\rm{ta}}}}.
\end{equation}
Additionally, given that $\rm BS_2$ uses the RSMA scheme to implement multicast communications, all signals transmitted  by $\rm BS_2$ to $N$ communication users can be divided into two parts, i.e.,  common stream and the private stream, namely $\left\{ {z_{\rm 1}^{\rm{c}},z_{\rm 1}^{\rm{p}}} \right\}$, $\left\{ {z_{\rm 2}^{\rm{c}},z_{\rm 2}^{\rm{p}}} \right\}$, $ \cdots  \cdots $, $\left\{ {z_{{\rm }N}^{\rm{c}},z_{{\rm },N}^{\rm{p}}} \right\}$, respectively. 
Then, $N$ common streams, $\left\{ {z_{\rm 1}^{\rm{c}},z_{\rm 2}^{\rm{c}}, \cdots ,z_{\rm \it N}^{\rm{c}}} \right\}$, are extracted,  combined, and encoded as the common information, $s_{\rm c}$ via a codebook shared by $N$ communication users.
Private streams, $\left\{ {z_{\rm 1}^{\rm{p}},z_{\rm 2}^{\rm{p}}, \cdots ,z_{\rm \it N}^{\rm{p}}} \right\}$, are encoded separately as private information $\left\{ {{s_{ 1}},{s_{ 2}}, \cdots ,{s_{ N}}} \right\}$, respectively. It is assumed that the signals $x_{\rm bo}$, $x_{\rm ta}$, $s_{\rm c}$, and $s_n$ are uncorrelated with zero mean and unit variance.
Then, we assume that ${{\bf{o}}_{\rm{c}}}\in {\mathbb{C}^{M_2 \times 1}}$ and ${{\bf{o}}_{{\it n}}}\in {\mathbb{C}^{M_2 \times 1}}$ are beamformers of the common information $s_{\rm i,c}$ and the private information of the $n$-th communication user ${s_{ n}}$, $n \in \left\{ {1,2, \cdots ,N} \right\}$, respectively. 
The signal transmitted by ${\rm BS_ 2}$ can be
\begin{equation}\label{BS2_signal}
{{\bf{x}}_2} = {{\bf{o}}_{\rm{c}}}{s_{\rm{c}}} + \sum\nolimits_{n = 1}^N {{{\bf{o}}_n}{s_n}}.
\end{equation}
Let ${{\bf{h}}_{{\rm{bo}}}} \in {\mathbb{C}^{{M_1} \times 1}}$, ${{\bf{h}}_{{\rm{ta}}}} \in {\mathbb{C}^{{M_1} \times 1}}$ ${{\bf{h}}_{{\rm{e}}}} \in {\mathbb{C}^{{M_1} \times 1}}$, and ${{\bf{h}}_{{n}}} \in {\mathbb{C}^{{M_1} \times 1}}$ represent the channels from $\rm BS_1 $ to Bob, Eve, and the $n$-th user. Likewise, ${{\bf{g}}_{{\rm{bo}}}} \in {\mathbb{C}^{{M_2} \times 1}}$, ${{\bf{g}}_{{\rm{e}}}} \in {\mathbb{C}^{{M_2} \times 1}}$, and ${{\bf{g}}_{{n}}} \in {\mathbb{C}^{{M_2} \times 1}}$ represent the channels from $\rm BS_2 $ to Bob, Eve, and the $n$-th user. $n_{\rm bo}$, $n_{\rm e}$, and $n_{ n}$ are independent and identically distributed complex random Gaussian
noises with mean being 0 and variance being $\sigma^2$ at Bob, Eve, and $n$-th user.
Received signals at Bob, Eve, and $n$-th user are given by
\begin{equation}\label{Bob_signal}
\begin{array}{*{20}{l}} \displaystyle
	{{y_{{\rm{bo}}}} = {\bf{h}}_{{\rm{bo}}}^{\rm{H}}{{\bf{w}}_{{\rm{bo}}}}{x_{{\rm{bo}}}} + {n_{{\rm{bo}}}}}\\ \quad\;\;\, \displaystyle
	{ + {\bf{h}}_{{\rm{bo}}}^{\rm{H}}{{\bf{w}}_{{\rm{ta}}}}{x_{{\rm{ta}}}} + {\bf{g}}_{{\rm{bo}}}^{\rm{H}}{{\bf{o}}_{\rm{c}}}{s_{\rm{c}}} + \sum\nolimits_{n = 1}^N {{\bf{g}}_{{\rm{bo}}}^{\rm{H}}{{\bf{o}}_n}{s_n}} ,}
\end{array}
\end{equation}
\begin{equation}\label{Eve_signal}
\begin{array}{*{20}{l}} \displaystyle
	{{y_{\rm{e}}}{\rm{ = }}{\bf{h}}_{\rm{e}}^{\rm{H}}{{\bf{w}}_{{\rm{bo}}}}{x_{{\rm{bo}}}} + {n_{\rm{e}}}}\\ \displaystyle \quad
	{ + {\bf{h}}_{\rm{e}}^{\rm{H}}{{\bf{w}}_{{\rm{ta}}}}{x_{{\rm{ta}}}} + {\bf{g}}_{\rm{e}}^{\rm{H}}{{\bf{o}}_{\rm{c}}}{s_{\rm{c}}} + \sum\nolimits_{n = 1}^N {{\bf{g}}_{\rm{e}}^{\rm{H}}{{\bf{o}}_n}{s_n}} ,}
\end{array}
\end{equation}
\begin{equation}\label{n-th-user_signal}
\begin{array}{*{20}{l}} \displaystyle
	{{y_n} = {\bf{g}}_n^{\rm{H}}{{\bf{o}}_{\rm{c}}}{s_{\rm{c}}} + {\bf{g}}_n^{\rm{H}}{{\bf{o}}_n}{s_n} + {n_n}}\\ \displaystyle  \quad\;
+	{\sum\nolimits_{k \ne n} {{\bf{g}}_n^{\rm{H}}{{\bf{o}}_k}{s_k}}  + {\bf{h}}_n^{\rm{H}}{{\bf{w}}_{{\rm{bo}}}}{x_{{\rm{bo}}}} + {\bf{h}}_n^{\rm{H}}{{\bf{w}}_{{\rm{ta}}}}{x_{{\rm{ta}}}},}
\end{array}
\end{equation}
respectively. 
In practice, Eve’s channel is inherently difficult to estimate accurately. Assuming perfect or average-case CSI for Eve is impractical and risky, as it could lead to an overestimation of the achieved security rate. The worst-case approach considering the error bound ensures the designed beamformers guarantee a minimum security rate even under the most unfavorable channel estimation error within the bounded set. Likewise, while Bob may have more precise CSI through estimation and feedback, employing a robust design for these links ensures QoS constraints are satisfied despite estimation errors, enhancing system reliability. Besides, the bounded error model coupled with the triangle inequality can transform an intractable probabilistic problem into a deterministic one efficiently. Hence, we define imperfect CSI for $\rm BS_1$-$i$ and $\rm BS_2$-$i$ links. As such, we have ${{\bf{h}}_{i}} = {{\bf{h}}_{i,{\rm{es}}}} + {{\bf{h}}_{i,{\rm{er}}}}$ and ${{\bf{g}}_{i}} = {{\bf{g}}_{i,{\rm{es}}}} + {{\bf{g}}_{i,{\rm{er}}}}$, respectively, where ${{\bf{h}}_{i,{\rm{es}}}}$ and ${{\bf{g}}_{i,{\rm{es}}}}$ are the estimated CSI of the corresponding links, ${{\bf{h}}_{i,{\rm{er}}}}$ and ${{\bf{g}}_{i,{\rm{er}}}}$ represent the differences between the real and estimated CSI, of which the 2-norms satisfy $0 \le {\left\| {{{\bf{h}}_{i,{\rm{er}}}}} \right\|_2} \le {{{e}}_{\rm{h}}}$ as well as $0 \le {\left\| {{{\bf{g}}_{i,{\rm{er}}}}} \right\|_2} \le {{{e}}_{\rm{g}}}$, $i \in \left\{ {{\rm{bo,e}}} \right\}$, where ${{e}_{\rm{h}}}$ and ${{{e}}_{\rm{g}}}$ are the upper bounds of 2-norms. In what follows, given that the derivation of Bob's imperfect CSI is similar to that of Eve's imperfect CSI and to concisely demonstrate the formula derivation, we take the CSI uncertainty derivation of the Eve as an example.
To quantify the CSI uncertainty of $\rm BS_1$-Eve link, there exist
\begin{equation}\label{BS1_Eve_bo_uncertainty}
{\left\| {{\bf{h}}_{\rm{e}}^{\rm{H}}{{\bf{w}}_{{\rm{bo}}}}} \right\|_2^2 = {\bf{w}}_{{\rm{bo}}}^{\rm{H}}\left( {{{\bf{h}}_{{\rm{es}}}}{\bf{h}}_{{\rm{es}}}^{\rm{H}} + {{\bf{\Delta }}_{\rm{h}}}} \right){{\bf{w}}_{{\rm{bo}}}},}
\end{equation}
\begin{equation}\label{BS1_Eve_ta_uncertainty}
	\begin{array}{l}
		\left\| {{\bf{h}}_{\rm{e}}^{\rm{H}}{{\bf{w}}_{{\rm{ta}}}}} \right\|_2^2 
		= {\bf{w}}_{{\rm{ta}}}^{\rm{H}}\left( {{{\bf{h}}_{{\rm{es}}}}{\bf{h}}_{{\rm{es}}}^{\rm{H}} + {{\bf{\Delta }}_{\rm{h}}}} \right){{\bf{w}}_{{\rm{ta}}}},
	\end{array}
\end{equation}
where ${{\bf{\Delta }}_{\rm{h}}} = {{\bf{h}}_{{\rm{es}}}}{\bf{h}}_{{\rm{er}}}^{\rm{H}} + {{\bf{h}}_{{\rm{er}}}}{\bf{h}}_{{\rm{es}}}^{\rm{H}} + {{\bf{h}}_{{\rm{er}}}}{\bf{h}}_{{\rm{er}}}^{\rm{H}}$ is the CSI error matrix of $\rm BS_1$-Eve link, and satisfies the compatibility and triangle inequality constraint. 
The bounds for ${{\bf{\Delta }}_{\rm{h}}}$ and ${{\bf{\Delta }}_{\rm{g}}}$ are derived by applying the triangle inequality to the norm of the sum of rank-one matrices, leading to the corresponding upper bounds $e_{\rm h,UB}$ and $ e_{\rm g,UB}$. More details are provided in Appendix A.
The aforementioned error bounds used in ${{{\left\| {{{\bf{\Delta }}_{\rm{h}}}} \right\|}_2} \le  {{{e}}_{{\rm{h}},{\rm{UB}}}}}$ and ${\left\| {{{\bf{\Delta }}_{\rm{g}}}} \right\|_2} \le  {{{e}}_{{\rm{g,UB}}}}$ in the derivations of ${{\bf{\Delta }}_{\rm{h}}}$ and ${{\bf{\Delta }}_{\rm{g}}}$ are not arbitrarily chosen but are direct mathematical consequences of applying the triangle inequality to the original error model. They represent the maximum possible effect of the bounded errors $ e_{\rm h}$ and $ e_{\rm g}$ on the quadratic terms 	$\left\| {{\bf{h}}_{\rm{e}}^{\rm{H}}{{\bf{w}}_{{\rm{ta}}}}} \right\|_2^2$, $\left\| {{\bf{h}}_{\rm{e}}^{\rm{H}}{{\bf{w}}_{{\rm{bo}}}}} \right\|_2^2$, $\left\| {{\bf{g}}_{\rm{e}}^{\rm{H}}{{\bf{o}}_{{\rm{c}}}}} \right\|_2^2$, and $\left\| {{\bf{g}}_{\rm{e}}^{\rm{H}}{{\bf{o}}_{{n}}}} \right\|_2^2$, respectively.

In the ISAC, the decoding order is (i) sensing and (ii) communication signals. Hence, it is assumed that  the channel gain of the sensing signal is greater than that of the communication signal, i.e, $\left\| {{\bf{h}}_{{\rm{bo}}}^{\rm{H}}{{\bf{w}}_{{\rm{bo}}}}} \right\|_2^2 < \left\| {{\bf{h}}_{{\rm{bo}}}^{\rm{H}}{{\bf{w}}_{{\rm{ta}}}}} \right\|_2^2$ and $\left\| {{\bf{h}}_{{\rm{bo}}}^{\rm{H}}{{\bf{w}}_{{\rm{bo}}}}} \right\|_2^2 < \left\| {{\bf{h}}_{{\rm{bo}}}^{\rm{H}}{{\bf{w}}_{{\rm{ta}}}}} \right\|_2^2$

The sensing SINR at Bob can be obtained as
\begin{equation}\label{Bob_sensing_SINR}
{\gamma _{{\rm{bo,ta}}}} = \frac{{\left\| {{\bf{h}}_{{\rm{bo}}}^{\rm{H}}{{\bf{w}}_{{\rm{ta}}}}} \right\|_2^2}}{{\left\| {{\bf{h}}_{{\rm{bo}}}^{\rm{H}}{{\bf{w}}_{{\rm{bo}}}}} \right\|_2^2 + \left\| {{\bf{g}}_{{\rm{bo}}}^{\rm{H}}{{\bf{o}}_{\rm{c}}}} \right\|_2^2 + \sum\nolimits_{n = 1}^N {\left\| {{\bf{g}}_{{\rm{bo}}}^{\rm{H}}{{\bf{o}}_n}} \right\|_2^2}  + {\sigma ^2}}}.
\end{equation}
To employ SIC and remove sensing signals and reserve communication signals, the following condition needs to be satisfied: $ {\gamma _{{\rm{bo,ta}}}} \ge {\gamma _{{\rm{th}}}}$,
where ${\gamma _{{\rm{th}}}}$ is the required threshold for decoding sensing signals successfully. Then, after eliminating sensing signals at Bob and accounting for the bounded errors, the worst-case SINR at Bob can be expressed as
\begin{equation}\label{Bob_communication_SINR}
\gamma _{{\rm{bo,bo}}}^{{\rm{worst}}} = \frac{{{\bf{w}}_{{\rm{bo}}}^{\rm{H}}{{\bf{H}}_{{\rm{bo,min}}}}{{\bf{w}}_{{\rm{bo}}}}}}{{{\bf{o}}_{\rm{c}}^{\rm{H}}{{\bf{G}}_{{\rm{bo,max}}}}{{\bf{o}}_{\rm{c}}} + \sum\limits_{n = 1}^N {{\bf{o}}_n^H} {{\bf{G}}_{{\rm{bo,max}}}}{{\bf{o}}_n} + {\sigma ^2}}},
\end{equation}
where ${{\bf{H}}_{{\rm{bo,min}}}} = {{\bf{h}}_{{\rm{bo,es}}}}{\bf{h}}_{{\rm{bo,es}}}^{\rm{H}} - {{{e}}_{{\rm{h}}}}({\left\| {{{\bf{h}}_{{\rm{bo,es}}}}} \right\|_2} + {{{e}}_{{\rm{h}}}}){{\bf{I}}_{M_1 \times M_1}}$ and ${{\bf{G}}_{{\rm{bo,max}}}} = {{\bf{g}}_{{\rm{bo,es}}}}{\bf{g}}_{{\rm{bo,es}}}^{\rm{H}} + {{{e}}_{{\rm{g}}}}({\left\| {{{\bf{g}}_{{\rm{bo,es}}}}} \right\|_2} + {{{e}}_{{\rm{g}}}}){\bf{I}}_{M_2 \times M_2}$.
Similarly, the sensing SINR at Eve is 
\begin{equation}\label{Eve_sensing_SINR}
{\gamma _{{\rm{e,ta}}}} = \frac{{\left\| {{\bf{h}}_{\rm{e}}^{\rm{H}}{{\bf{w}}_{{\rm{ta}}}}} \right\|_2^2}}{{\left\| {{\bf{h}}_{\rm{e}}^{\rm{H}}{{\bf{w}}_{{\rm{bo}}}}} \right\|_2^2 + \left\| {{\bf{g}}_{\rm{e}}^{\rm{H}}{{\bf{o}}_{\rm{c}}}} \right\|_2^2 + \sum\nolimits_{n = 1}^N {\left\| {{\bf{g}}_{\rm{e}}^{\rm{H}}{{\bf{o}}_n}} \right\|_2^2}  + {\sigma ^2}}}.
\end{equation}
In contrast to Bob, the Eve is constrained to be failed to decode
the target’s message, obtaining the communication signal with the sensing signal interference, and thus degradating the eavesdropping SINR, the following condition needs to be satisfied: $	{\gamma _{{\rm{bo,ta}}}} <  {\gamma _{{\rm{th}}}}$.
Therefore, the communication SINR at Eve can be respectively expressed as
 \begin{equation}\label{Eve_communication_SINR}
 {\gamma _{{\rm{e,bo}}}} = \frac{{\left\| {{{\bf{h}}_{\rm{e}}}{\bf{w}}_{{\rm{bo}}}^{\rm{H}}} \right\|_2^2}}{{\left\| {{{\bf{h}}_{\rm{e}}}{\bf{w}}_{{\rm{ta}}}^{\rm{H}}} \right\|_2^2 + \left\| {{{\bf{g}}_{\rm{e}}}{\bf{o}}_{\rm{c}}^{\rm{H}}} \right\|_2^2 + \sum\nolimits_{n = 1}^N {\left\| {{{\bf{g}}_{\rm{e}}}{\bf{o}}_n^{\rm{H}}} \right\|_2^2}  + {\sigma ^2}}}.
 \end{equation}
Thus, the worst-case security rate is given by
 \begin{equation}\label{defi_RS}
{R^{\rm worst}_{\rm{S}}} = {\left[ {{R_{{\rm{bo}}}^{\rm worst}} - {R_{\rm{e}}}} \right]^ + },
  \end{equation}
where $	{R_{{\rm{bo}}}^{\rm worst}} = {\log _2}\left( {1 + {\gamma _{{\rm{bo,bo}}}^{\rm worst}}} \right)$ and ${R_{\rm{e}}} = {\log _2}\left( {1 + {\gamma _{{\rm{e,bo}}}}} \right)$ are the worst-case legitimate rate at Bob and eavesdropping rate at Eve, and ${\left[ x \right]^ + } = \max \left\{ {x,0} \right\}$.

Meanwhile, $\rm BS_2$ provides RSMA-based downlink data transmission to $N$ users. According to RSMA, the common information $s_{\rm c}$ is decoded into a common stream first, and the private information $s_n$ is regarded as interference. Then, by applying SIC, the common stream is recoded, pre-encoded, and removed from the received signal. The private information $s_n$ of the $n$-th user is decoded into a private stream, and private informations of other users is treated as interference.
Consequently, the SINRs of the common and the private streams of the $n$-th user can be respectively expressed as
\begin{equation}\label{common_SINR}
{\gamma _{\rm{c}}} = \frac{{\left\| {{\bf{g}}_{n}^{\rm{H}}{{\bf{o}}_{\rm{c}}}} \right\|_2^2}}{{\sum\nolimits_{n = 1}^N {\left\| {{\bf{g}}_{n}^{\rm{H}}{{\bf{o}}_n}} \right\|_2^2}  + \left\| {{\bf{h}}_{n}^{\rm{H}}{{\bf{w}}_{{\rm{bo}}}}} \right\|_2^2 + \left\| {{\bf{h}}_{n}^{\rm{H}}{{\bf{w}}_{{\rm{ta}}}}} \right\|_2^2 + {\sigma ^2}}},
\end{equation}
\begin{equation}\label{private_SINR}
{\gamma _n} = \frac{{\left\| {{\bf{g}}_n^{\rm{H}}{{\bf{o}}_n}} \right\|_2^2}}{{\sum\nolimits_{j \ne n} {\left\| {{\bf{g}}_n^{\rm{H}}{{\bf{o}}_j}} \right\|_2^2}  + \left\| {{\bf{h}}_n^{\rm{H}}{{\bf{w}}_{{\rm{bo}}}}} \right\|_2^2 + \left\| {{\bf{h}}_n^{\rm{H}}{{\bf{w}}_{{\rm{ta}}}}} \right\|_2^2 + \sigma _n^2}}.
\end{equation}
The corresponding common and private stream rates can be respectively given by ${R_{\rm{c}}} = {\log _2}\left( {1 + {\gamma _{\rm{c}}}} \right)$ and ${R_{{n}}} = {\log _2}\left( {1 + {\gamma _{{n}}}} \right)$.

\subsection{Sensing Signal Model}
Since $\rm BS_1$ is fully aware of its own transmit signal ${{\bf{x}}_1}$, which is composed of the sensing signal ${{\bf{w}}_{{\rm{ta}}}}{x_{{\rm{ta}}}}$ and the communication signal ${{\bf{w}}_{{\rm{bo}}}}{x_{{\rm{bo}}}}$, ${{\bf{x}}_1}$ can be used to detect the target. Meanwhile, sensing is affected by clutter from the environment. Therefore, the echo signal received at $\rm BS_1$ can be expressed as
\begin{equation}\label{echo_signal}
\begin{array}{l} \displaystyle
	{{\bf{y}}_{{\rm{b1}}}} = \xi {{\bf{h}}_{{\rm{ta}}}}{\bf{h}}_{{\rm{ta}}}^{\rm{H}}\left( {{{\bf{w}}_{{\rm{bo}}}}{x_{{\rm{bo}}}} + {{\bf{w}}_{{\rm{ta}}}}{x_{{\rm{ta}}}}} \right)\\  \displaystyle \quad\;\;\;
	+ {{\bf{I}}_{{M_{1 \times 1}}}}{\bf{h}}_{{\rm{cl}}}^{\rm{H}}\left( {{{\bf{w}}_{{\rm{bo}}}}{x_{{\rm{bo}}}} + {{\bf{w}}_{{\rm{ta}}}}{x_{{\rm{ta}}}}} \right) + {{\bf{n}}_{{\rm{b1}}}},
\end{array}
\end{equation}
where ${{\bf{h}}_{{\rm{ta}}}} \in {\mathbb{C}^{{M_1} \times 1}}$ is the channel from $\rm BS_1$ to the target and ${{\bf{h}}_{{\rm{cl}}}} \in {\mathbb{C}^{{M_1} \times 1}}$ is the channel from the environment to $\rm BS_1$. 
Additionally, $\xi $ is the \emph{radar cross section} (RCS) coefficient with the mean square value being ${\kappa ^2}$, and ${{\bf{n}}_{{\rm{b1}}}} \sim CN\left( {0,{\sigma ^2}{{\bf{I}}_{{M_1} \times 1}}} \right)$ represents the complex random Gaussian noise at $\rm BS_1$ with a mean being 0 and a variance being $\sigma^2$. We model the clutter from the environment as complex random Gaussian noise with a mean being 0 and a variance being $\sigma^2$ \cite{ISAC_PLS_54}.

Then, a radar receiver filter vector at $\rm BS_1$ is defined as ${{\bf{a}}} \in {\mathbb{C}^{{M_1} \times 1}}$, and thus the further processed echo signal and the corresponding SCNR can be respectively obtained as
\begin{equation}\label{a_echo_signal}
\begin{array}{l} \displaystyle
	{\bf{a}}^{\rm{H}}{{\bf{y}}_{{\rm{b1}}}} = \xi {\bf{a}}^{\rm{H}}{{\bf{h}}_{{\rm{ta}}}}{\bf{h}}_{{\rm{ta}}}^{\rm{H}}\left( {{{\bf{w}}_{{\rm{bo}}}}{x_{{\rm{bo}}}} + {{\bf{w}}_{{\rm{ta}}}}{x_{{\rm{ta}}}}} \right)\\ \displaystyle \qquad\quad \,
	+ {\bf{a}}^{\rm{H}}{{\bf{I}}_{{M_1} \times 1}}{\bf{h}}_{{\rm{cl}}}^{\rm{H}}\left( {{{\bf{w}}_{{\rm{bo}}}}{x_{{\rm{bo}}}} + {{\bf{w}}_{{\rm{ta}}}}{x_{{\rm{ta}}}}} \right) + {\bf{a}}^{\rm{H}}{{\bf{n}}_{{\rm{b1}}}},
\end{array}
\end{equation}
and
\begin{equation}\label{echo_SCNR}
\gamma _{{\rm{b1}}}= \frac{{{\kappa ^2}{{\bf{a}}^{\rm{H}}}{{\bf{H}}_{{\rm{ta-b1}}}}({{\bf{W}}_{{\rm{bo}}}} + {{\bf{W}}_{{\rm{ta}}}}){\bf{H}}_{{\rm{ta-b1}}}^{\rm{H}}{\bf{a}}}}{{2{\sigma ^2}{{\bf{a}}^{\rm{H}}}{\bf{a}}}}.
\end{equation}
Likewise, to successfully decode the echo signal at $\rm BS_1$, the related constraint $\gamma _{{\rm{b1}}} \ge {\gamma _{{\rm{th}}}}$ should be satisfied.

\section{Optimization Formulation and Solution}
We aim at jointly enhancing the system's sensing, communication, security, and energy efficiency. Given the intractability of closed-form solutions for this complex problem, we focus on developing an efficient iterative algorithm.  Our novelty lies in the system modeling, problem formulation, and algorithm development for the joint optimization in a heterogeneous ISAC network. The mathematical derivations we presented are essential for developing the computationally efficient alternating optimization algorithm.
To this end, we design a joint BF optimization scheme to reduce the interference exerted on Bob, suppress Eve's eavesdropping, and improve the sensing accuracy and energy utilization based on RSMA and the constructed green interference. Specifically, our objective is to maximize the SEE of the entire heterogeneous ISAC. The optimization objective, SEE, is related to the legitimate rate of Bob $R_{\rm bo}$, the eavesdropping rate of the Eve $R_{\rm e}$, the power consumptions of $\rm BS_1$, $\rm BS_2$, and the circuit, $P_1$, $P_2$, and $P_0$, respectively. Optimization of $\textbf{a}$ reduces $P_1$ while $\rm BS_1$ can successfully decode the echo signal. Optimization of $\textbf{w}_{\rm bo}$ leads to a larger $R_{\rm bo}$, optimizing $\textbf{w}_{\rm ta}$ can reduce $P_1$ and $R_{\rm e}$, $\textbf{o}_{\rm c}$ and $\textbf{o}_{n}$ are optimized to increase the multicast communication rate and reduce $R_{\rm e}$. Consequently, the optimization problem can be mathematically expressed as
\begin{subequations} \label{P1}
	\begin{align}
		&{\bf{P1}}. \;\mathop {\max }\limits_{{\bf{a}},{{\bf{w}}_{{\rm{bo}}}},{{\bf{w}}_{{\rm{ta}}}},{{\bf{o}}_{\rm{c}}},{{\bf{o}}_n}} \frac{{R_{\rm{S}}^{{\rm{worst}}}}}{{{P_1} + {P_2} + {P_0}}}\\
		& {\rm{s}}{\rm{.t.}}\quad {\rm{    }}\left\| {{{\bf{w}}_{{\rm{bo}}}}} \right\|_2^2 + \left\| {{{\bf{w}}_{{\rm{ta}}}}} \right\|_2^2 = {P_1} \le {P_{{\rm{1,max}}}},\\ \label{P1_C5}
		&  \quad\quad\,\,{\rm{    }}\left\| {{{\bf{o}}_{\rm{c}}}} \right\|_2^2 + \sum\nolimits_{n = 1}^N {\left\| {{{\bf{o}}_n}} \right\|_2^2}  = {P_2} \le {P_{{\rm{2,max}}}}, 	\\ \label{P1_C5-2}		
		&\quad\quad\,\, {R_{\rm{S}}^{\rm worst}} \ge {I_{\rm{S}}}\, \& \, {R_{\rm{c}}} \ge {I_{\rm{c}}}\, \&\, {R_n} \ge {I_{\rm p}},\\ \label{P1_C2}
		&\quad\quad\,\,  \left\| {{\bf{h}}_{{\rm{bo}}}^{\rm{H}}{{\bf{w}}_{{\rm{bo}}}}} \right\|_2^2 \le \left\| {{\bf{h}}_{{\rm{bo}}}^{\rm{H}}{{\bf{w}}_{{\rm{ta}}}}} \right\|_2^2,\\ \label{P1_C3}
		&\quad\quad\,\, \left\| {{\bf{h}}_{\rm{e}}^{\rm{H}}{{\bf{w}}_{{\rm{bo}}}}} \right\|_2^2 \le \left\| {{\bf{h}}_{\rm{e}}^{\rm{H}}{{\bf{w}}_{{\rm{ta}}}}} \right\|_2^2,\\ \label{P1_C3-2}
		&\quad\quad\,\, {\rm{     }}{\gamma _{{\rm{bo}},{\rm{ta}}}} \ge {\gamma _{{\rm{th}}}} > {\mkern 1mu} {\gamma _{{\rm{e}},{\rm{ta}}}},	\gamma _{{\rm{b1}}}\ge {\gamma _{{\rm{th}}}}, \, \gamma _{{\rm{bo,bo}}}^{{\rm{worst}}} \ge {\gamma _{{\rm{th}}}},
	\end{align}	
\end{subequations}
where (\ref{P1}a) is the optimization objective on SEE maximization. (\ref{P1}b) and (\ref{P1}c) are constraints of $\rm BS_1$ and $\rm BS_2$ power consumptions, $P_{\rm 1,max}$ and $P_{\rm 2,max}$ are maximums of $P_1$ and $P_2$, $P_0$ is constant circuit power consumption. (\ref{P1}d) are constraints on security rate, common stream rate, and private stream rate, respectively. $I_{\rm S}$, $I_{\rm c}$, and $I_{\rm p}$ are thresholds of security rate, common stream rate, and private stream rate. (\ref{P1}e) and (\ref{P1}f) are constraints on the decoding sequence of the sensing signal and the communication signal. (\ref{P1}g) are constraints that Bob and $\rm BS_1$ successfully decode sensing signals and Eve fails to decode its received sensing signal, deemed as interference for Eve, as well as the related worst-case constraints on thresholds. What's more, we define ${P_{{\rm{sum}}}} = {P_1} + {P_2} + {P_0}$.

It can be observed that there is a deep coupling relationship between the variables to be optimized in \textbf{P1} and the optimization objective and corresponding constraints, which brings notable challenges for effectively and reliably solving \textbf{P1}. To this end, we present an alternative iterative optimization algorithm based on Taylor series expansion, MM, SDP, and SCA, respectively.
In particular, we decompose the original non-convex and intractable targeted optimization \textbf{P1} into three sub-optimization problems. The first sub-optimization problem focuses on the optimization of the echo signal BF vector at $\rm BS_1$. The second one considers the optimization of the transmit BF vector at $\rm BS_1$. The last one is the optimization of the transmit BF vector at $\rm BS_2$.
In what follows, it is defined ${{\bf{X}}_i} = {{\bf{x}}_i}{\bf{x}}_i^{\rm{H}}$.
The fractional form of the objective and the coupled variables {$\textbf{a}$, $\textbf{w}_{\rm bo}$, $\textbf{w}_{\rm ta}$, $\textbf{o}_{\rm c}$, $\textbf{o}_n$} render problem (\ref{P1}) intractable. Therefore, we split the original problem into three sub-problems, and propose an alternating optimization scheme to solve it.

\subsection{Optimization of the Echo Signal BF Vector at $\rm BS_1$}
To optimize the echo signal BF vector at $\rm BS_1$ $\textbf{a}$, we keep $\textbf{w}_{\rm bo}$, $\textbf{w}_{\rm ta}$, $\textbf{o}_{\rm c}$, and $\textbf{o}_n$ fixed, and we present the equivalent form of the constraint  (\ref{P1}h) as 
\begin{subequations} \label{PA1}
	\begin{align}
		&{\rm{        }}{\bf{P2.1}}{\rm{.}}\; \mathop {\max }\limits_{\bf{a}} \frac{{{\kappa ^2}}}{{2{\sigma ^2}}}\frac{{{{\bf{a}}^{\rm{H}}}\left( {{{\bf{H}}_{{\rm{ta - b1}}}}{{\bf{w}}_{{\rm{b1}}}}{\bf{w}}_{{\rm{b1}}}^{\rm{H}}{\bf{H}}_{{\rm{ta - b1}}}^{\rm{H}}} \right){\bf{a}}}}{{{{\bf{a}}^{\rm{H}}}{\bf{a}}}}\\
		&{\rm{s}}{\rm{.t.}}\quad \, (\ref{P1}\rm b) \, \mbox{-}  \, (\ref{P1}\rm g).
	\end{align}	
\end{subequations}
where ${{\bf{H}}_{{\rm{ta,b1}}}} = {{\bf{h}}_{{\rm{ta}}}}{\bf{h}}_{{\rm{ta}}}^{\rm{H}} \in {\mathbb{C}^{{M_1} \times {M_1}}}$ and ${{\bf{w}}_{{\rm{b1}}}} = \left[ {{{\bf{w}}_{{\rm{bo}}}},{{\bf{w}}_{{\rm{ta}}}}} \right] \in {\mathbb{C}^{{M_1} \times 2}}$.

Herein, $\textbf{S}$ is a Hermitian matrix, which can be expressed as
\begin{equation}\label{Hermitian_matrix}
{{{\bf{S}}^{\rm{H}}} = {{\left[ {\frac{{{\kappa ^2}{{\bf{h}}_{{\rm{ta}}}}{\bf{h}}_{{\rm{ta}}}^{\rm{H}}\left( {{{\bf{w}}_{{\rm{bo}}}}{\bf{w}}_{{\rm{bo}}}^{\rm{H}} + {{\bf{w}}_{{\rm{ta}}}}{\bf{w}}_{{\rm{ta}}}^{\rm{H}}} \right){{\bf{h}}_{{\rm{ta}}}}{\bf{h}}_{{\rm{ta}}}^{\rm{H}}}}{{2{\sigma ^2}}}} \right]}^{\rm{H}}} = {\bf{S}}}.
\end{equation}
We diagonalize the Hermitian matrix $\textbf{S}$, define the eigenvalue diagonal matrix of $\textbf{S}$ as ${\bf{\tilde S}} = {\mathop{\rm diag}\nolimits} \left\{ {{\lambda _1},{\lambda _2}, \cdots ,{\lambda _{{M_1}}}} \right\} \in {\mathbb{C}^{{M_1} \times {M_1}}}$, define the eigenvector matrix of $\textbf{S}$ as ${\bf{Q}} = \left\{ {{{\bf{q}}_1},{{\bf{q}}_2}, \cdots ,{{\bf{q}}_{{M_1}}}} \right\} \in {\mathbb{C}^{{M_1} \times {M_1}}}$  with ${\bf{Q}}{{\bf{Q}}^{\rm{H}}} = {{\bf{I}}_{{M_1} \times {M_1}}}$. Then, (\ref{PA1}) can be rewritten as
\begin{subequations} \label{PA2}
	\begin{align}
		&{\rm{        }}{\bf{P2.2}}{\rm{.}}\;{\mathop {\max }\limits_{\bf{a}} \frac{{{{\bf{a}}^{\rm{H}}}{\bf{Q\tilde S}}{{\bf{Q}}^{\rm{H}}}{\bf{a}}}}{{{{\bf{a}}^{\rm{H}}}{\bf{Q}}{{\bf{Q}}^{\rm{H}}}{\bf{a}}}}}  \\
		&{\rm{s}}{\rm{.t.}}\quad \, (\ref{P1}\rm b) \, \mbox{-} \, (\ref{P1}\rm g).
	\end{align}	
\end{subequations}
Besides, let ${\bf{\tilde Q}} = {{\bf{Q}}^{\rm{H}}}{\bf{a}} = {\left\{ {{{{\bf{\tilde q}}}_1},{{{\bf{\tilde q}}}_2}, \cdots ,{{{\bf{\tilde q}}}_{{M_1}}}} \right\}^{\rm{H}}}$. By applying ${\bf{\tilde Q}}$ to  (\ref{PA2}), we obtain
\begin{subequations} \label{PA3}
	\begin{align}
		&{\rm{        }}{\bf{P2.3}}{\rm{.}}\;{\mathop {\max }\limits_{\bf{a}} \frac{{\sum\nolimits_{j = 1}^{{M_1}} {{\lambda _j}{\bf{\tilde q}}_j^2} }}{{\sum\nolimits_{j = 1}^{{M_1}} {{\bf{\tilde q}}_j^2} }}}\\
		&{\rm{s}}{\rm{.t.}}\quad \, (\ref{P1}\rm b) \, \mbox{-} \, (\ref{P1}\rm g).
	\end{align}	
\end{subequations}
Then, after defining ${\lambda _{\max }} = \max \left\{ {{\lambda _1},{\lambda _2}, \cdots ,{\lambda _{{M_1}}}} \right\}$ and ${\lambda _{\min }} = \min \left\{ {{\lambda _1},{\lambda _2}, \cdots ,{\lambda _{{M_1}}}} \right\}$, we obtain
\begin{equation}\label{PA3_max}
\frac{{\sum\nolimits_{j = 1}^{{M_1}} {{\lambda _j}{\bf{\tilde q}}_j^2} }}{{\sum\nolimits_{j = 1}^{{M_1}} {{\bf{\tilde q}}_j^2} }} \le \frac{{\sum\nolimits_{j = 1}^{{M_1}} {{\lambda _{\max }}{\bf{\tilde q}}_j^2} }}{{\sum\nolimits_{j = 1}^{{M_1}} {{\bf{\tilde q}}_j^2} }} = \frac{{{\lambda _{\max }}\sum\nolimits_{j = 1}^{{M_1}} {{\bf{\tilde q}}_j^2} }}{{\sum\nolimits_{j = 1}^{{M_1}} {{\bf{\tilde q}}_j^2} }} = {\lambda _{\max }},
\end{equation}
\begin{equation}\label{PA3_min}
\frac{{\sum\nolimits_{j = 1}^{{M_1}} {{\lambda _j}{\bf{\tilde q}}_j^2} }}{{\sum\nolimits_{j = 1}^{{M_1}} {{\bf{\tilde q}}_j^2} }} \ge \frac{{\sum\nolimits_{j = 1}^{{M_1}} {{\lambda _{\min }}{\bf{\tilde q}}_j^2} }}{{\sum\nolimits_{j = 1}^{{M_1}} {{\bf{\tilde q}}_j^2} }} = \frac{{{\lambda _{\min }}\sum\nolimits_{j = 1}^{{M_1}} {{\bf{\tilde q}}_j^2} }}{{\sum\nolimits_{j = 1}^{{M_1}} {{\bf{\tilde q}}_j^2} }} = {\lambda _{\min }}.
\end{equation}
Finally, (\ref{PA1}) can be simplified as
\begin{subequations} \label{PA4}
	\begin{align}
		&{\rm{        }}{\bf{P2}}{\rm{.}}\; \mathop {\max }\limits_{\bf{a}} \frac{{{{\bf{a}}^{\rm{H}}}{\bf{Sa}}}}{{{{\bf{a}}^{\rm{H}}}{\bf{a}}}} = {\lambda _{\max }}\\
		&{\rm{s}}{\rm{.t.}}\quad \, (\ref{P1}\rm b) \, \mbox{-}  \, (\ref{P1}\rm g).
	\end{align}	
\end{subequations}
In this case, the optimization variable $\textbf{a}$ is equivalent to the eigenvector corresponding to the maximum eigenvalue of \textbf{S}, i.e., ${\bf{Sa}} = {\lambda _{\max }}{\bf{a}}$.
We use QR decomposition to compute the maximum of the target optimization, i.e., the maximum eigenvalue of \textbf{S}, and the corresponding eigenvector, i.e., the optimization variable $\textbf{a}$. The corresponding calculation process is provided in Algorithm \ref{algorithm_1}.

\subsection{Optimization of the Transmit BF Vectors at $\rm BS_1$}
For fixed $\textbf{a}$, $\textbf{o}_{\rm c}$, and $\textbf{o}_n$, we optimize $\textbf{w}_{\rm bo}$ and $\textbf{w}_{\rm ta}$ in \textbf{P1} jointly. 
The security rate in \textbf{P1} can be rewritten as
\begin{equation}\label{security_rewritten}
	\begin{array}{l} \displaystyle
		{R_{\rm{S}}} = {R_{\rm{1}}} - {R_{\rm{2}}} + {R_{\rm{3}}} - {R_{\rm{4}}}\\ \displaystyle \quad\;\;
		= {\log _2}\left( {{\mathop{\rm Tr}\nolimits} \left( {{{\bf{H}}_{{\rm{bo,min}}}}{{\bf{W}}_{{\rm{bo}}}}} \right) + {\alpha _{{\rm{bo}}}}} \right)\\ \displaystyle\quad\;\;
		- {\log _2}\left( {{\mathop{\rm Tr}\nolimits} \left( {{{\bf{H}}_{\rm{e}}}{{\bf{W}}_{{\rm{bo}}}}} \right) + {\mathop{\rm Tr}\nolimits} \left( {{{\bf{H}}_{\rm{e}}}{{\bf{W}}_{{\rm{ta}}}}} \right) + {\alpha _{\rm{e}}}} \right)\\ \displaystyle \quad\;\;
		+ {\log _2}\left( {{\mathop{\rm Tr}\nolimits} \left( {{{\bf{H}}_{\rm{e}}}{{\bf{W}}_{{\rm{ta}}}}} \right) + {\alpha _{\rm{e}}}} \right) - {\log _2}\left( {{\alpha _{{\rm{bo}}}}} \right),
	\end{array}
\end{equation}
where ${\alpha _{{\rm{bo}}}} = {\mathop{\rm Tr}\nolimits} \left( {{{\bf{G}}_{{\rm{bo,max}}}}{{\bf{O}}_{\rm{c}}}} \right) + \sum\nolimits_{n = 1}^N {{\mathop{\rm Tr}\nolimits} \left( {{{\bf{G}}_{{\rm{bo,max}}}}{{\bf{O}}_n}} \right)}  + {\sigma ^2}$, and ${\alpha _{\rm{e}}} = {\mathop{\rm Tr}\nolimits} \left( {{{\bf{G}}_{\rm{e}}}{{\bf{O}}_{\rm{c}}}} \right) + \sum\nolimits_{n = 1}^N {{\mathop{\rm Tr}\nolimits} \left( {{{\bf{G}}_{\rm{e}}}{{\bf{O}}_n}} \right)}  + {\sigma ^2}$ are constants.

\begin{algorithm}[t]
	\caption{Optimization of echo signal BF vector at $\rm BS_1$.}
	\label{algorithm_1}
	{\algorithmicrequire} Given $\textbf{w}_{\rm bo}$, $\textbf{w}_{\rm ta}$, $\textbf{o}_{\rm c}$, and $\textbf{o}_n$, and related variables in the defined models.
	
	{\algorithmicensure} Optimal echo signal BF vector $\textbf{a}$.
	\begin{algorithmic}[1] 
		\State Set the mean square value of RCS $\kappa$, iteration threshold $\delta$, $\varepsilon=0$, $\lambda_{0}=0$, and $\lambda_{-1}=0$;
		\While {all constraints in (\ref{PA4}\rm b) are satisfied}
		\Repeat 
		\State Solve the optimization objective in (\ref{PA4}) to obtain 
		\Statex \qquad \;\,\, the maximum eigenvalue $\lambda_\varepsilon$;
		\State Update ${{\bf{a}}_\varepsilon } = {\mathop{\rm QR}\nolimits} \left( {{\bf{S}},{\lambda _\varepsilon }} \right)$; 
		\State $\varepsilon  = \varepsilon  + 1$;
		\Until {${\lambda _\varepsilon } - {\lambda _{\varepsilon  - 1}} \le {\delta}$};
		\EndWhile
		\State Calculate optimal echo signal BF vector $\textbf{a}$ at $\rm BS_1$ with QR decomposition based on maximum eigenvalue, $\lambda_{\rm max}$;
	\end{algorithmic}
\end{algorithm}

\begin{algorithm}[htbp]
	\caption{Optimization of the transmit BF vectors at $\rm BS_1$.}
	\label{algorithm_2}
	{\algorithmicrequire} Given $\textbf{a}$, $\textbf{o}_{\rm c}$, and $\textbf{o}_n$, and related variables in the defined models.
	
	{\algorithmicensure} Optimal transmit signal BF vectors $\textbf{w}_{\rm bo}$ and $\textbf{w}_{\rm ta}$.
	\begin{algorithmic}[1] 
		\State Set auxiliary variables $r_{\rm a}$ and $s_{\rm a}$, iteration threshold $\delta$, $\varepsilon=0$, ${\rm SEE}_{0}=0$, and ${\rm SEE}_{-1}=0$;
		\While {all constraints in (\ref{PBB-3}\rm b) are satisfied}
		\State Set $i = 0$, and initialize ${r_{{\rm a},0,i }}$ and ${s_{{\rm a},0,i }}$;
		\Repeat 
		\State $i  = i  + 1$;
		\State Solve the optimization objective in (\ref{PBB-3}) to obtain 
		\Statex \qquad \;\,\, maximum ${\rm SEE}=\left( {2{r_{\rm{a}}}\sqrt {{R^{\rm worst}_{{\rm{S}}}}\left( r \right)}  - r_{\rm{a}}^2{P_{{\rm{sum}}}}} \right)$;
		\State Update $\textbf{W}_{{\rm bo},{\varepsilon}}$, $\textbf{W}_{{\rm ta},{\varepsilon}}$, $r_{{\rm a, 0},i}$, and $s_{{\rm a, 0},i}$;
		\Until {${\rm{SE}}{{\rm{E}}_\varepsilon } - {\rm{SE}}{{\rm{E}}_{\varepsilon  - 1}} \le {\delta }$};
		\State Obtain solutions $\textbf{W}_{\rm bo}$ and $\textbf{W}_{\rm ta}$;
		\State Set ${{\bf{W}}_{{\rm{bo}},\varepsilon  + 1}}{\rm{ = }}{{\bf{W}}_{\rm{bo}}}$ and ${{\bf{W}}_{{{\rm ta}},\varepsilon  + 1}}{\rm{ = }}{{\bf{W}}_{{\rm ta}}}$;
		\While {${{\bf{W}}_{{\rm{bo}},\varepsilon  + 1}} \approx {{\bf{W}}_{{\rm{bo}},\varepsilon }}$, and ${{\bf{O}}_{{{n}},\varepsilon  + 1}} \approx {{\bf{O}}_{{{n}},\varepsilon }}$  are not \indent \indent  satisfied}
		\State $\varepsilon  = \varepsilon  + 1$;
		\EndWhile
		\EndWhile
		\State Employ \emph{singular value decomposition} (SVD) to $\textbf{W}_{\rm bo, \varepsilon}$ and $\textbf{W}_{\rm ta,\varepsilon}$, and then the transmit signal BF vectors $\textbf{w}_{\rm bo}$ and $\textbf{w}_{\rm ta}$ are ascertained, respectively;
	\end{algorithmic}
\end{algorithm} 

According to the MM, the following inequality is satisfied: ${\log _2}\left( z \right) \le \frac{{z - {z_0}}}{{{z_0}}} + {\log _2}\left( {{z_0}} \right)$, where $z_0$ is a specific value of variable $z$. Therefore, the second item and the third item in (\ref{security_rewritten}) can be upper-bounded as
\begin{equation}\label{2nd_UB}
\begin{array}{l} \displaystyle
	{R_{\rm{2}}} \le \frac{{{\mathop{\rm Tr}\nolimits} \left( {{{\bf{H}}_{\rm{e}}}{{\bf{W}}_{{\rm{bo}}}}} \right) - {\mathop{\rm Tr}\nolimits} \left( {{{\bf{H}}_{\rm{e}}}{{\bf{W}}_{{\rm{bo,}}i}}} \right)}}{{{\mathop{\rm Tr}\nolimits} \left( {{{\bf{H}}_{\rm{e}}}{{\bf{W}}_{{\rm{bo,}}i}}} \right) + {\mathop{\rm Tr}\nolimits} \left( {{{\bf{H}}_{\rm{e}}}{{\bf{W}}_{{\rm{ta}}}}} \right) + {\alpha _{\rm{e}}}}}\\  \displaystyle \quad\,\,\,
	+ \frac{{{\mathop{\rm Tr}\nolimits} \left( {{{\bf{H}}_{\rm{e}}}{{\bf{W}}_{{\rm{ta}}}}} \right) - {\mathop{\rm Tr}\nolimits} \left( {{{\bf{H}}_{\rm{e}}}{{\bf{W}}_{{\rm{ta,}}j}}} \right)}}{{{\mathop{\rm Tr}\nolimits} \left( {{{\bf{H}}_{\rm{e}}}{{\bf{W}}_{{\rm{bo,}}i}}} \right) + {\mathop{\rm Tr}\nolimits} \left( {{{\bf{H}}_{\rm{e}}}{{\bf{W}}_{{\rm{ta,}}j}}} \right) + {\alpha _{\rm{e}}}}}\\  \displaystyle \quad\,\,\,
	+ {\log _2}\left( {{\mathop{\rm Tr}\nolimits} \left( {{{\bf{H}}_{\rm{e}}}{{\bf{W}}_{{\rm{bo,}}i}}} \right) + {\mathop{\rm Tr}\nolimits} \left( {{{\bf{H}}_{\rm{e}}}{{\bf{W}}_{{\rm{ta,}}j}}} \right) + {\alpha _{\rm{e}}}} \right),
\end{array}
\end{equation}
and
\begin{equation}\label{3rd_UB}
\begin{array}{l} \displaystyle
	{R_{\rm{3}}} \le \frac{{{\mathop{\rm Tr}\nolimits} \left( {{{\bf{H}}_{\rm{e}}}{{\bf{W}}_{{\rm{ta}}}}} \right) - {\mathop{\rm Tr}\nolimits} \left( {{{\bf{H}}_{\rm{e}}}{{\bf{W}}_{{\rm{ta,}}i}}} \right)}}{{{\mathop{\rm Tr}\nolimits} \left( {{{\bf{H}}_{\rm{e}}}{{\bf{W}}_{{\rm{ta,}}i}}} \right) + {\alpha _{\rm{e}}}}}\\ \displaystyle \quad\,\,\,
	+ {\log _2}\left( {{\mathop{\rm Tr}\nolimits} \left( {{{\bf{H}}_{\rm{e}}}{{\bf{W}}_{{\rm{ta,}}i}}} \right) + {\alpha _{\rm{e}}}} \right),
\end{array}
\end{equation}
where ${{{\bf{W}}_{{\rm{bo,}}i}}}$ is the $i$-th iteration result of ${{{\bf{W}}_{{\rm{bo}}}}}$, and ${{{\bf{W}}_{{\rm{ta,}}j}}}$ is the $j$-th iteration result of ${{{\bf{W}}_{{\rm{ta}}}}}$.
Based on (\ref{BS1_Eve_bo_uncertainty}) and (\ref{BS1_Eve_ta_uncertainty}), we obtain the minimum of $R_2$ in (\ref{R2_minimum}) and the maximum of $R_3$ in (\ref{R3_maximum}) at the top of this page, where ${{\bf{H}}_{{\rm{e,max}}}} = {{\bf{H}}_{{\rm{es}}}} + {\rm{e}}_{\rm{h,UB}}{{\bf{I}}_{{M_1} \times {M_1}}}$ and ${{\bf{H}}_{{\rm{e,min}}}} = {{\bf{H}}_{{\rm{es}}}} - {\rm{e}}_{\rm{h,UB}}{{\bf{I}}_{{M_1} \times {M_1}}}$.
\begin{figure*}
	\begin{equation}
		\label{R2_minimum} \displaystyle
\begin{array}{l} \displaystyle
	{R_{{\rm{2}},{\rm{min}}}} = \frac{{{\rm{Tr}}\left( {{{\bf{H}}_{{\rm{e}},{\rm{min}}}}{{\bf{W}}_{{\rm{bo}}}}} \right) - {\rm{Tr}}\left( {{{\bf{H}}_{{\rm{e}},{\rm{max}}}}{{\bf{W}}_{{\rm{bo}},i}}} \right)}}{{{\rm{Tr}}\left( {{{\bf{H}}_{{\rm{e}},{\rm{max}}}}{{\bf{W}}_{{\rm{bo}},i}}} \right) + {\rm{Tr}}\left( {{{\bf{H}}_{{\rm{e}},{\rm{max}}}}{{\bf{W}}_{{\rm{ta}}}}} \right) + {\alpha _{\rm{e}}}}} + \frac{{{\rm{Tr}}\left( {{{\bf{H}}_{{\rm{e}},{\rm{min}}}}{{\bf{W}}_{{\rm{ta}}}}} \right) - {\rm{Tr}}\left( {{{\bf{H}}_{{\rm{e}},{\rm{max}}}}{{\bf{W}}_{{\rm{ta}},j}}} \right)}}{{{\rm{Tr}}\left( {{{\bf{H}}_{{\rm{e}},{\rm{max}}}}{{\bf{W}}_{{\rm{bo}},i}}} \right) + {\rm{Tr}}\left( {{{\bf{H}}_{{\rm{e}},{\rm{max}}}}{{\bf{W}}_{{\rm{ta}},j}}} \right) + {\alpha _{\rm{e}}}}}\\ \displaystyle \qquad\quad
	+ {\log _2}\left( {{\rm{Tr}}\left( {{{\bf{H}}_{{\rm{e}},{\rm{min}}}}{{\bf{W}}_{{\rm{bo}},i}}} \right) + {\rm{Tr}}\left( {{{\bf{H}}_{{\rm{e}},{\rm{min}}}}{{\bf{W}}_{{\rm{ta}},j}}} \right) + {\alpha _{\rm{e}}}} \right),
\end{array}
	\end{equation}
\end{figure*}
\begin{figure*}
	\begin{equation}
		\label{R3_maximum} \displaystyle
	{R_{{\rm{3}},{\rm{max}}}} = \frac{{{\rm{Tr}}\left( {{{\bf{H}}_{{\rm{e}},{\rm{max}}}}{{\bf{W}}_{{\rm{ta}}}}} \right) - {\rm{Tr}}\left( {{{\bf{H}}_{{\rm{e}},{\rm{min}}}}{{\bf{W}}_{{\rm{ta}},i}}} \right)}}{{{\rm{Tr}}\left( {{{\bf{H}}_{{\rm{e}},{\rm{min}}}}{{\bf{W}}_{{\rm{ta}},i}}} \right) + {\alpha _{\rm{e}}}}} + {\log _2}\left( {{\rm{Tr}}\left( {{{\bf{H}}_{{\rm{e}},{\rm{max}}}}{{\bf{W}}_{{\rm{ta}},i}}} \right) + {\alpha _{\rm{e}}}} \right).
	\end{equation}
	\hrulefill
\end{figure*}

The worst-case $R_{\rm S}$ can be expressed as ${R^{\rm worst}_{{\rm{S}} }} = {R_1} - {R_{{\rm{2,min}}}} + {R_{{\rm{3,max}}}} - {R_{{\rm{4}}}}$.
Then, \textbf{P1} is transformed into
\begin{subequations} \label{PBBB1}
	\begin{align}
		&{\rm{        }}{\bf{P3.1}}{\rm{.}}\;\mathop {\max }\limits_{{{\bf{w}}_{{\rm{bo}}}},{{\bf{w}}_{{\rm{ta}}}}} \frac{{{R^{\rm worst}_{{\rm{S}}}}\left( {{{\bf{w}}_{{\rm{bo}}}},{{\bf{w}}_{{\rm{ta}}}}} \right)}}{{{P_1}\left( {{{\bf{w}}_{{\rm{bo}}}},{{\bf{w}}_{{\rm{ta}}}}} \right) + {P_2} + {P_0}}}\\
		&{\rm{s}}{\rm{.t.}}\quad \, (\ref{P1}\rm b) \, \mbox{-} \, (\ref{P1}\rm h).
	\end{align}	
\end{subequations}
As shown in (\ref{PBBB1}a), we need to increase $R^{\rm worst}_{\rm S}$ and decrease $P_1$ to maximize SEE through optimization of $\textbf{w}_{\rm bo}$ and $\textbf{w}_{\rm ta}$. Given optimization function in (\ref{PBBB1}) is fractional and intractable, with a slack variable $r$ and an auxiliary variable $r_{\rm a}$ introduced, the equivalent sub-optimization of (\ref{PBBB1}) is given by
\begin{subequations} \label{PBB-2}
	\begin{align}
		&{\rm{        }}{\bf{P3.2}}{\rm{.}}\,\begin{array}{*{20}{l}}
			{\mathop {\max }\limits_{{{\bf{w}}_{{\rm{bo}}}},{{\bf{w}}_{{\rm{ta}}}},{r_{\rm{a}}}} 2{r_{\rm{a}}}\sqrt {{R^{\rm worst}_{{\rm{S}}}}}  - r_{\rm{a}}^2\left( {{P_1} + {P_2} + {P_0}} \right)}
		\end{array}\\
		&{\rm{s}}{\rm{.t.}}\quad \, (\ref{P1}\rm b) \, \mbox{-} \, (\ref{P1}\rm h), \\
		& \quad \quad\,\,  \, {R^{\rm worst}_{\rm{S}}} \ge {\log _2}\left( {1 + r} \right).
	\end{align}	
\end{subequations}
Since (\ref{PBB-2}\rm c) is non-convex, we introduce an auxiliary variable $s_{\rm a}$ to transform inequality constraints into equality ones. Then, (\ref{PBB-2}\rm c) can be rewritten as
\begin{equation}\label{PB2_C2}
2{s_{\rm{a}}}\sqrt {1 + {\gamma^{\rm worst}_{{\rm{bo}},{\rm{bo}}}}}  - s_{\rm{a}}^2\left( {1 + {\gamma _{{\rm{e}},{\rm{bo}}}}} \right) \ge r.
\end{equation}

Let us define the $i$-th iteration of ${{\bf{w}}_{\rm{bo}}}$ and ${{\bf{w}}_{\rm ta}}$ are defined as ${{\bf{w}}_{{\rm{bo}},i}}$ and ${{\bf{w}}_{{\rm ta},i}}$; then corresponding approximate matrices are given by
\begin{equation}\label{w_bo_approximation}
	{{\bf{W}}_{{\rm{bo,}}i}} = {{\bf{w}}_{{\rm{bo,}}i}}{\bf{w}}_{\rm{bo}}^{\rm{H}} + {{\bf{w}}_{\rm{bo}}}{\bf{w}}_{{\rm{bo,}}i}^{\rm{H}} - {{\bf{w}}_{{\rm{bo,}}i}}{\bf{w}}_{{\rm{bo,}}i}^{\rm{H}},
\end{equation}
\begin{equation}\label{w_ta_approximation}
	{{\bf{W}}_{{\rm{ta,}}i}} = {{\bf{w}}_{{\rm{ta,}}i}}{\bf{w}}_{\rm{ta}}^{\rm{H}} + {{\bf{w}}_{\rm{ta}}}{\bf{w}}_{{\rm{ta,}}i}^{\rm{H}} - {{\bf{w}}_{{\rm{ta,}}i}}{\bf{w}}_{{\rm{ta,}}i}^{\rm{H}}.
\end{equation}
Finally, the non-convex targeted optimization \textbf{P1} can be rewritten in a convex form as follows
\begin{subequations} \label{PBB-3}
	\begin{align}
		&{\rm{        }}{\bf{P3}}{\rm{.}}\;\mathop {\max }\limits_{{{\bf{w}}_{{\rm{bo}}}},{{\bf{w}}_{{\rm{ta}}}},{r_{\rm a}},{s_{\rm a}}} 2{r_{\rm{a}}}\sqrt {{R^{\rm worst}_{{\rm{S}} }}\left( r \right)}  - r_{\rm{a}}^2{P_{{\rm{sum}}}}\\
		&{\rm{s}}{\rm{.t.}}\quad \, (\ref{P1}\rm b) \, \mbox{-} \, (\ref{P1}\rm h), and \, (\ref{PB2_C2}).
	\end{align}	
\end{subequations}
Calculation process is summarized in Algorithm \ref{algorithm_2}.

\subsection{Optimization of the Transmit BF Vector at $\rm BS_2$}
For the third sub-optimization, we need to increase $R^{\rm worst}_{\rm S}$ and decrease $P_2$ to maximize SEE through optimization of $\textbf{o}_{\rm c}$ and $\textbf{o}_n$.
For given $\textbf{a}$, $\textbf{w}_{\rm bo}$, and $\textbf{w}_{\rm ta}$, and then $\textbf{o}_{\rm c}$ and $\textbf{o}_n$ are optimized  in \textbf{P1}. For the sake of simplicity, we define: ${{\bf{\tilde H}}_{{\rm{ta - b1}}}} = {{\bf{H}}_{{\rm{ta,b1}}}}{\bf{H}}_{{\rm{ta,b1}}}^{\rm{H}}$, ${{\bf{H}}_{{\rm{bo}}}} = {{\bf{h}}_{{\rm{bo}}}}{\bf{h}}_{{\rm{bo}}}^{\rm{H}}$, ${{\bf{H}}_{\rm{e}}} = {{\bf{h}}_{\rm{e}}}{\bf{h}}_{\rm{e}}^{\rm{H}}$, ${{\bf{W}}_{{\rm{bo}}}} = {{\bf{w}}_{{\rm{bo}}}}{\bf{w}}_{{\rm{bo}}}^{\rm{H}}$, and ${{\bf{W}}_{{\rm{ta}}}} = {{\bf{w}}_{{\rm{ta}}}}{\bf{w}}_{{\rm{ta}}}^{\rm{H}}$, respectively.
Then, we hold that
\begin{equation}\label{PB_C1}
\mathop {\max }\limits_{{{\bf{h}}_{{\rm{es}}}}} {\rm{Tr}}\left( {{{\bf{H}}_{\rm{e}}}{{\bf{W}}_{{\rm{bo}}}}} \right) = {\rm{Tr}}\left[ {\left( {{{\bf{H}}_{{\rm{es}}}} + {\rm{e}}_{\rm{h,UB}}{{\bf{I}}_{{M_1} \times {M_1}}}} \right){{\bf{W}}_{{\rm{bo}}}}} \right],
\end{equation}
\begin{equation}\label{PB_C2}
	\mathop {\min }\limits_{{{\bf{h}}_{{\rm{es}}}}} {\rm{Tr}}\left( {{{\bf{H}}_{\rm{e}}}{{\bf{W}}_{{\rm{bo}}}}} \right) = {\rm{Tr}}\left[ {\left( {{{\bf{H}}_{{\rm{es}}}} - {\rm{e}}_{\rm{h,UB}}{{\bf{I}}_{{M_1} \times {M_1}}}} \right){{\bf{W}}_{{\rm{bo}}}}} \right],
\end{equation}
where ${{\bf{I}}_{{M_{\rm{1}}}}}$ represents the identity matrix with rank being $M_1$.
\begin{proposition}\label{Proposition1}
The constraint on the security rate in (\ref{P1}d) can be converted to
\begin{equation}\label{SR_written}
\left( {\tau  - 1} \right){U_1}{U_2} \ge \tau {\rm{Tr}}\left[ {{{\bf{H}}_{{\rm{bo,min}}}}{{\bf{W}}_{\rm bo}}} \right]{\rm{Tr}}\left[ {{{\bf{H}}_{\rm{e}}}{{\bf{W}}_{\rm bo}}} \right],
\end{equation}
\begin{equation}\label{SR_1}
\begin{array}{l} \displaystyle
	{U_1} = {\rm{Tr}}\left( {{{\bf{H}}_{{\rm{bo,min}}}}{{\bf{W}}_{{\rm{bo}}}}} \right) + \left( {1 - \tau } \right){\sigma ^2}\\ \displaystyle \quad\;\,
	+ \left( {1 - \tau } \right){\rm{Tr}}\left( {{{\bf{G}}_{{\rm{bo,max}}}}{{\bf{O}}_{\rm{c}}} + {{\bf{G}}_{{\rm{bo,max}}}}\sum\nolimits_{n = 1}^N {{{\bf{O}}_n}} } \right),
\end{array}
\end{equation}
\begin{equation}\label{SR_2}
\begin{array}{l} \displaystyle
	{U_2} = \frac{\tau }{{\tau  - 1}}{\rm{Tr}}\left[ {{{\bf{H}}_{\rm{e}}}{{\bf{W}}_{{\rm{bo}}}}} \right] + \sigma _{\rm{e}}^2\\ \displaystyle \quad\;\,
	+ {\rm{Tr}}\left[ {{{\bf{G}}_{\rm{e}}}{{\bf{O}}_{\rm{c}}} + {{\bf{G}}_{\rm{e}}}\sum\nolimits_{n = 1}^N {{{\bf{O}}_n}} } \right] + {\rm{Tr}}\left( {{{\bf{H}}_{\rm{e}}}{{\bf{W}}_{{\rm{ta}}}}} \right),
\end{array}
\end{equation}
respectively, where $\tau  = {2^{{I_{\rm{S}}}}}$.
\end{proposition}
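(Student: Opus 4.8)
The plan is to treat Proposition~\ref{Proposition1} as a purely algebraic rewriting of the constraint $R_{\rm S}\ge I_{\rm S}$, with no bounding or approximation entering at this stage. First I would note that, because the secrecy threshold satisfies $I_{\rm S}>0$, the clipping in ${R_{\rm{S}}}={\left[{{R_{{\rm{bo}}}}-{R_{\rm{e}}}}\right]^{+}}$ is inactive on the feasible set, so $R_{\rm S}\ge I_{\rm S}$ is equivalent to ${R_{{\rm{bo}}}}-{R_{\rm{e}}}\ge I_{\rm S}$. Substituting ${R_{{\rm{bo}}}}={\log _2}(1+{\gamma _{{\rm{bo,bo}}}})$ and ${R_{\rm{e}}}={\log _2}(1+{\gamma _{{\rm{e,bo}}}})$ and exponentiating to base $2$ turns this into $1+{\gamma _{{\rm{bo,bo}}}}\ge\tau\,(1+{\gamma _{{\rm{e,bo}}}})$ with $\tau=2^{I_{\rm S}}$. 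Into this I would insert the trace forms of (\ref{Bob_communication_SINR}) and (\ref{Eve_communication_SINR}) obtained through ${{\bf{X}}_i}={{\bf{x}}_i}{\bf{x}}_i^{\rm{H}}$, so that $1+{\gamma _{{\rm{bo,bo}}}}=({\rm Tr}({{\bf{H}}_{{\rm{bo}}}}{{\bf{W}}_{{\rm{bo}}}})+{\alpha _{{\rm{bo}}}})/{\alpha _{{\rm{bo}}}}$, with ${\alpha _{{\rm{bo}}}}$ the aggregate interference-plus-noise term introduced after (\ref{security_rewritten}), and $1+{\gamma _{{\rm{e,bo}}}}=({\rm Tr}({{\bf{H}}_{\rm{e}}}{{\bf{W}}_{{\rm{bo}}}})+{\rm Tr}({{\bf{H}}_{\rm{e}}}{{\bf{W}}_{{\rm{ta}}}})+\beta)/({\rm Tr}({{\bf{H}}_{\rm{e}}}{{\bf{W}}_{{\rm{ta}}}})+\beta)$, where $\beta$ gathers ${\rm Tr}({{\bf{G}}_{\rm{e}}}{{\bf{O}}_{\rm{c}}})$, $\sum_{n}{\rm Tr}({{\bf{G}}_{\rm{e}}}{{\bf{O}}_n})$ and the noise power at Eve.

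Since every denominator above is strictly positive (each carries a $\sigma^2$ or $\sigma_{\rm e}^2$ term), I would clear the fractions without reversing any inequality. Writing $D={\rm Tr}({{\bf{H}}_{\rm{e}}}{{\bf{W}}_{{\rm{ta}}}})+\beta$, the constraint becomes the bilinear inequality
\[
\left({\rm Tr}({{\bf{H}}_{{\rm{bo}}}}{{\bf{W}}_{{\rm{bo}}}})+{\alpha _{{\rm{bo}}}}\right)D \;\ge\; \tau\,{\alpha _{{\rm{bo}}}}\left({\rm Tr}({{\bf{H}}_{\rm{e}}}{{\bf{W}}_{{\rm{bo}}}})+D\right).
\]
Expanding and moving the $\tau{\alpha _{{\rm{bo}}}}D$ term to the left makes $U_1={\rm Tr}({{\bf{H}}_{{\rm{bo}}}}{{\bf{W}}_{{\rm{bo}}}})-(\tau-1){\alpha _{{\rm{bo}}}}$ appear as the coefficient of $D$, i.e.\ $D\,U_1\ge\tau{\alpha _{{\rm{bo}}}}\,{\rm Tr}({{\bf{H}}_{\rm{e}}}{{\bf{W}}_{{\rm{bo}}}})$; writing out ${\alpha _{{\rm{bo}}}}$ gives exactly (\ref{SR_1}). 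Next I would use the identity ${\alpha _{{\rm{bo}}}}=({\rm Tr}({{\bf{H}}_{{\rm{bo}}}}{{\bf{W}}_{{\rm{bo}}}})-U_1)/(\tau-1)$ to eliminate ${\alpha _{{\rm{bo}}}}$ on the right, multiply through by $(\tau-1)>0$, and regroup, which yields $(\tau-1)U_1\big(D+\tfrac{\tau}{\tau-1}{\rm Tr}({{\bf{H}}_{\rm{e}}}{{\bf{W}}_{{\rm{bo}}}})\big)\ge\tau\,{\rm Tr}({{\bf{H}}_{{\rm{bo}}}}{{\bf{W}}_{{\rm{bo}}}})\,{\rm Tr}({{\bf{H}}_{\rm{e}}}{{\bf{W}}_{{\rm{bo}}}})$; the second factor is precisely $U_2$ of (\ref{SR_2}). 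Re-expanding $\beta$ and ${\alpha _{{\rm{bo}}}}$ reproduces (\ref{SR_1})--(\ref{SR_2}) term by term, and since every step is reversible, the converted system (\ref{SR_written})--(\ref{SR_2}) is equivalent to $R_{\rm S}\ge I_{\rm S}$, proving the claim.

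I do not expect a genuine ``hard step'' here---the substance is careful bookkeeping---but three points need attention. First, the manipulation, including the removal of $[\cdot]^{+}$ and every division/multiplication by $\tau-1$, is valid only for $\tau>1$, i.e.\ $I_{\rm S}>0$; this holds by the design choice of a strictly positive secrecy threshold and should be stated explicitly. Second, clearing the fractions is legitimate only because the SINR denominators are positive, which is guaranteed by the additive noise variances. Third, (\ref{SR_written}) is an \emph{equivalent} reformulation, not a convexification: its left side is the product $U_1U_2$ and its right side a product of two affine trace functionals, so the constraint is still nonconvex and must be handled by the SCA/SDP steps that follow; in the same spirit, Eve's imperfect CSI is not absorbed at this stage and is dealt with afterwards through the worst-case substitutions ${{\bf{H}}_{{\rm{e,max}}}}$/${{\bf{H}}_{{\rm{e,min}}}}$ used in (\ref{PB_C1})--(\ref{PB_C2}).
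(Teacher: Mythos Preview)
Your proposal is correct and follows essentially the same route as the paper's proof in Appendix~A: both start from $1+\gamma_{\rm bo,bo}\ge\tau(1+\gamma_{\rm e,bo})$ (the paper writes this as $\gamma_{\rm bo,bo}-(\tau-1)\ge\tau\gamma_{\rm e,bo}$), clear the positive denominators, and regroup to isolate the product $U_1U_2$. Your version is in fact more careful than the paper's, which omits the justification that $[\cdot]^+$ is inactive and that $\tau-1>0$, and does not comment on the sign of the denominators; these points you raise are exactly the ones needed to make the chain of equivalences airtight.
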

\begin{proof}
For brevity, the proof of Proposition 1 is provided to Appendix B. 
\end{proof}
Then, the right part of (\ref{SR_written}) can be rewritten as
	\begin{equation}\label{pro2}
	{\rm{Tr}}\left[ {{{\bf{H}}_{{\rm{bo,min}}}}{{\bf{W}}_{{\rm{bo}}}}} \right]{\rm{Tr}}\left[ {{{\bf{H}}_{\rm{e}}}{{\bf{W}}_{{\rm{bo}}}}} \right]{\rm{ = Tr}}{\left[ {{{\bf{h}}_{{\rm{bo}}}}{\bf{h}}_{\rm{e}}^{\rm{H}}{{\bf{W}}_{{\rm{bo}}}}} \right]^2}.
	\end{equation} 
\begin{proposition}\label{pro_re2}
Given above derivations, the following inequality holds: 
\begin{equation}\label{pro3-1}
	{U_1} + {U_2} \ge {\left\| {\begin{array}{*{20}{c}}
				{2\sqrt {{\tau  \mathord{\left/
								{\vphantom {\tau  {\left( {\tau  - 1} \right)}}} \right.
								\kern-\nulldelimiterspace} {\left( {\tau  - 1} \right)}}} {\rm{Tr}}\left( {{{\bf{h}}_{{\rm{bo}}}}{\bf{h}}_{\rm{e}}^{\rm{H}}{{\bf{W}}_{\rm bo}}} \right)}\\ \displaystyle
				{ - {U_1} + {U_2}}
		\end{array}} \right\|_2}.
\end{equation}
\end{proposition}
\begin{proof}
Please refer to Appendix C.  
\end{proof}

Based on the consideration of the imperfect CSI and (\ref{SR_2}), the range value of $U_2$ can be expressed as ${U_{2,{\rm{UB}}}} \ge {U_2} \ge {U_{2,{\rm{LB}}}}$, where
\begin{equation}\label{range_UB}
\begin{array}{l} \displaystyle
	{U_{2,{\rm{UB}}}} = \frac{\upsilon }{{\upsilon  - 1}}{\rm{Tr}}\left[ {\left( {{{\bf{H}}_{{\rm{es}}}} + {{\rm{e}}_{{\rm{h,UB}}}}{{\bf{I}}_{{M_1} \times {M_1}}}} \right){{\bf{W}}_{{\rm{bo}}}}} \right]\\ \displaystyle \quad\quad\;\;\,
	+ {\rm{Tr}}\left[ {{{\bf{T}}_{\rm{e}}}{{\bf{O}}_{\rm{c}}} + {{\bf{T}}_{\rm{e}}}\sum\nolimits_{n = 1}^N {{{\bf{O}}_n}} } \right] + {\sigma ^2}\\ \displaystyle \quad\quad\;\;\,
	+ {\rm{Tr}}\left[ {\left( {{{\bf{H}}_{{\rm{es}}}} + {{\rm{e}}_{{\rm{h,UB}}}}{{\bf{I}}_{{M_1} \times {M_1}}}} \right){{\bf{W}}_{{\rm{ta}}}}} \right],
\end{array}
\end{equation}
and
\begin{equation}\label{range_LB}
\begin{array}{l} \displaystyle
	{U_{2,{\rm{LB}}}} = \frac{\upsilon }{{\upsilon  - 1}}{\rm{Tr}}\left[ {\left( {{{\bf{H}}_{{\rm{es}}}} - {{\rm{e}}_{{\rm{h,UB}}}}{{\bf{I}}_{{M_1} \times {M_1}}}} \right){{\bf{W}}_{{\rm{bo}}}}} \right]\\ \displaystyle \quad\quad\;\;\,
	+ {\rm{Tr}}\left[ {{{\bf{T}}_{\rm{e}}}{{\bf{O}}_{\rm{c}}} + {{\bf{T}}_{\rm{e}}}\sum\nolimits_{n = 1}^N {{{\bf{O}}_n}} } \right] + {\sigma ^2}\\ \displaystyle \quad\quad\;\;\,
	+ {\rm{Tr}}\left[ {\left( {{{\bf{H}}_{{\rm{es}}}} - {{\rm{e}}_{{\rm{h,UB}}}}{{\bf{I}}_{{M_1} \times {M_1}}}} \right){{\bf{W}}_{{\rm{ta}}}}} \right].
\end{array}
\end{equation}
\begin{proposition}\label{pro_re_3}
Based on (\ref{pro3-1}), (\ref{range_UB}), and (\ref{range_LB}), the following inequality holds: 
\begin{equation}\label{pro4-1}
	{U_1} + {U_{{\rm{2,LB}}}} \ge {\left\| {\begin{array}{*{20}{c}}
				{2\sqrt {{\tau  \mathord{\left/
								{\vphantom {\tau  {\left( {\tau  - 1} \right)}}} \right.
								\kern-\nulldelimiterspace} {\left( {\tau  - 1} \right)}}} {\rm{Tr}}\left[ {{\bf{\tilde H}}{{\bf{W}}_{{\rm{bo}}}}} \right]}\\
				{ - {U_1} + {U_{{\rm{2,UB}}}}}
		\end{array}} \right\|_2},
\end{equation}
where ${\bf{\tilde H}} = {{\bf{h}}_{{\rm{bo}}}}{\bf{h}}_{{\rm{es}}}^{\rm{H}} + {{\rm{e}}_{{\rm{h,UB}}}}{{\bf{I}}_{{M_1} \times {M_1}}}$.
\end{proposition}
\begin{proof}
Please refer to Appendix D.  
\end{proof}

Let $\chi  = \left\{ {{x_1},{x_2},{x_3},{x_4},{x_5},{x_6},{x_7}} \right\}$ be a set of auxiliary variables further introduced, and the equivalent form of \textbf{P1} can be expressed as
\begin{subequations} \label{PB1}
	\begin{align}
		&{\rm{        }}{\bf{P4.1}}{\rm{.}}\;\mathop {\max }\limits_{{{\bf{o}}_{\rm{c}}},{{\bf{o}}_n},\chi } {x_1}\left( {{{\bf{o}}_{\rm{c}}},{{\bf{o}}_n}} \right)\\
		&{\rm{s}}{\rm{.t.}}\quad{x_2} - {x_3} - {x_4} \ge {x_1}{x_7},\\
		&\quad\quad\,\,  {\rm{Tr}}\left( {{{\bf{W}}_{{\rm{b1}}}}} \right) + {\rm{Tr}}\left( {{{\bf{O}}_{{\rm{c,}}n}}} \right) \le {x_7},\\
		& \quad\quad\,\, {\rm{Tr}}\left( {{{\bf{H}}_{{\rm{bo,min}}}}{{\bf{W}}_{{\rm{b1}}}}} \right) + {\rm{Tr}}\left[ {{{\bf{G}}_{{\rm{bo,max}}}}{{\bf{O}}_{{\rm{c,}}n}}} \right] \ge \left( {{2^{{x_2}}} - 1} \right){\sigma ^2},\\
		& \quad\quad\,\,{\rm{Tr}}\left[ {{{\bf{G}}_{{\rm{bo,max}}}}{{\bf{O}}_{{\rm{c,}}n}}} \right] \le \left( {{2^{{x_3}}} - 1} \right){\sigma ^2},\\
		&\quad\quad\,\, {\rm{Tr}}\left[ {\left( {{{\bf{G}}_{{\rm{es}}}} + {{\rm{e}}_{{\rm{g}},{\rm{UB}}}}{{\bf{I}}_{{M_2} \times {M_2}}}} \right){{\bf{O}}_{{\rm{c,}}n}}} \right] \le \left( {{2^{{x_6}}} - 1} \right){\sigma ^2},\\  
		&\quad\quad\,\, {\rm{Tr}}\left[ {{{\bf{G}}_{{\rm{es,min}}}}{{\bf{O}}_{{\rm{c,}}n}}} \right] + {\rm{Tr}}\left[ {{{\bf{H}}_{{\rm{es,min}}}}{{\bf{W}}_{{\rm{b1}}}}} \right] \ge \left( {{2^{{x_5}}} - 1} \right){\sigma ^2}, \\
	& \quad\quad\,\,   - \left( {{2^{{I_{\rm{c}}}}} - 1} \right)f\left( {{{\bf{O}}_n}} \right){\rm{ + Tr}}\left( {{{\bf{G}}_n}{{\bf{O}}_{\rm{c}}}} \right) \ge 0,\\
	&\quad\quad\,\,   - \left( {{2^{{I_{\rm{p}}}}} - 1} \right)f\left( {{{\bf{O}}_j}} \right){\rm{ + Tr}}\left( {\sum\nolimits_{n = 1}^N {{{\bf{G}}_n}{{\bf{O}}_n}} } \right) \ge 0,\\
	&\quad\quad\,\,  {x_5} - {x_6} \le {x_4},\\
	&\quad\quad\,\, {\rm rank}\left( {{{\bf{W}}_{{\rm{b1}}}}} \right) = {\rm rank}\left( {{{\bf{O}}_{\rm{c}}}} \right) ={\rm rank}\left( {{{\bf{O}}_n}} \right) = 1,\\
	&\quad\quad\,\, (\ref{P1} \rm b), (\ref{P1} \rm c), (\ref{pro4-1}), 
	\end{align}	
\end{subequations}
where $f\left( {{{\bf{O}}_n}} \right) = \left[ {{\rm{Tr}}\left( {\sum\nolimits_{n = 1}^N {{{\bf{G}}_n}{{\bf{O}}_n}} } \right) + {\rm{Tr}}\left( {{{\bf{H}}_n}{{\bf{W}}_{{\rm{b1}}}}} \right) + \sigma _n^2} \right]$ and ${{\bf{O}}_{{\rm{c,}}n}} = {{\bf{O}}_{\rm{c}}} + \sum\nolimits_{n = 1}^N {{{\bf{O}}_n}}$.
Notice that (\ref{PB1}\rm b) is non-convex; thus another variable $x_8$ is provided to convert (\ref{PB1}\rm b) into ${x_2} - {x_3} - {x_4} \ge {x_8 ^2}$ and $f\left( {{x_7},{x_8}} \right) = x_1^{ - 1}x_7^{ - 1}x_8^2 \ge 1$, which leads to 
\begin{equation}\label{C1_transformation}
		\frac{{{x_2} - {x_3} - {x_4} + 1}}{2} \ge {\left\| {\begin{array}{*{20}{c}} \displaystyle
					{\frac{{{x_2} - {x_3} - {x_4} - 1}}{2}}\\ \displaystyle
					x_8 
			\end{array}} \right\|_2}.
\end{equation}

With the fixed $x_7$ and $x_8$, $x_{7,0}$ and $x_{8,0}$ taken into account, we hold that
\begin{equation}\label{specific_value}
		2{x_{7,0}}{x_{8,0}}{x_8} \ge x_{7,0}^2{x_1} + x_{8,0}^2{x_7}.
\end{equation}
(\ref{specific_value}) enables (\ref{PB1}\rm e) and (\ref{PB1}\rm f) to be respectively expressed as convex forms as follows
\begin{equation}\label{PB1_e_convex}
\begin{array}{l} \displaystyle
	{\rm{Tr}}\left[ {{{\bf{G}}_{{\rm{bo,max}}}}\left( {{{\bf{O}}_{\rm{c}}} + {{\sum\nolimits_{n = 1}^N {\bf{O}} }_n}} \right)} \right] \le \\ \displaystyle
	{\sigma ^2}\left[ {{2^{{x_{3,0}}}}\left( {{x_3}\ln 2 - {x_{3,0}}\ln 2 + 1} \right) - 1} \right],
\end{array}
\end{equation}
\begin{equation}\label{PB1_f_convex}
\begin{array}{l} \displaystyle
	{\rm{Tr}}\left[ {\left( {{{\bf{G}}_{{\rm{es}}}} + {{\rm{e}}_{{\rm{g,UB}}}}{{\bf{I}}_{{M_2} \times {M_2}}}} \right)\left( {{{\bf{O}}_{\rm{c}}} + \sum\nolimits_{n = 1}^N {{{\bf{O}}_n}} } \right)} \right]\\ \displaystyle
	\le {\sigma ^2}\left[ {{2^{{x_{6,0}}}}\left( {{x_6}\ln 2 - {x_{6,0}}\ln 2 + 1} \right) - 1} \right].
\end{array}
\end{equation}

Let us define the $i$-th iteration of ${{\bf{o}}_{\rm{c}}}$ and ${{\bf{o}}_n}$ are defined as ${{\bf{o}}_{{\rm{c}},i}}$ and ${{\bf{o}}_{n,i}}$. Then, the corresponding approximate matrices can be obtained as ${{\bf{O}}_{{\rm{c,}}i}} = {{\bf{o}}_{{\rm{c,}}i}}{\bf{o}}_{\rm{c}}^{\rm{H}} + {{\bf{o}}_{\rm{c}}}{\bf{o}}_{{\rm{c,}}i}^{\rm{H}} - {{\bf{o}}_{{\rm{c,}}i}}{\bf{o}}_{{\rm{c,}}i}^{\rm{H}}$ and ${{\bf{O}}_{n,i}} = {{\bf{o}}_{n,i}}{\bf{o}}_n^{\rm{H}} + {{\bf{o}}_n}{\bf{o}}_{n,i}^{\rm{H}} - {{\bf{o}}_{n,i}}{\bf{o}}_{n,i}^{\rm{H}}$.
Finally, the non-convex targeted optimization \textbf{P1} can be rewritten in a convex-form as
\begin{subequations} \label{PB2}
	\begin{align}
		&{\rm{        }}{\bf{P4}}.\;\mathop {\max }\limits_{{{\bf{o}}_{\rm{c}}},{{\bf{o}}_n},\chi } {x_1}\\
		&{\rm{s}}{\rm{.t.}}\quad (\ref{P1} \rm b), (\ref{P1} \rm c), (\ref{pro4-1}),  (\ref{PB1}\rm c), (\ref{PB1}\rm d), (\ref{PB1}\rm g)~(\ref{PB1}\rm k),\\
		&\quad\quad\,\, (\ref{specific_value}), (\ref{PB1_f_convex}).
	\end{align}	
\end{subequations}
The calculation process of (\ref{PB2}) is presented in Algorithm \ref{algorithm_3}. 
Finally, the complete algorithm for
solving (\ref{P1}) is outlined in Algorithm \ref{algorithm_4}, where
Algorithms \ref{algorithm_1}-\ref{algorithm_3} are alternately executed in each iteration.

\begin{algorithm}[t]
	\caption{Optimization of the transmit BF vectors at $\rm BS_2$.}
	\label{algorithm_3}
	
	{\algorithmicrequire} Given $\textbf{a}$, $\textbf{w}_{\rm bo}$, and $\textbf{w}_{\rm ta}$, and related variables in the defined models.
	
	{\algorithmicensure} Optimal transmit signal BF vectors $\textbf{o}_{\rm c}$, and $\textbf{o}_n$.
	
	\begin{algorithmic}[1] 
		\State Set a set of auxiliary variables $\chi$ and $x_8$, iteration threshold $\delta$, $\varepsilon=0$, ${\rm SEE}_{0}=0$, and ${\rm SEE}_{-1}=0$;
		\State Set positive values ${\rm e _h}$, ${\rm e _g}$, ${\rm e _{\rm h, UB}}$, ${\rm e _{\rm g, UB}}$, ${I _{\rm{S}}}$, ${I _{\rm{c}}}$, ${I _{\rm{p}}}$, $P_{\rm 1, max}$, and $P_{\rm 2, max}$;
		\State Initialize  $\textbf{O}_{\rm c, \varepsilon}$ and $\textbf{O}_{n,\varepsilon}$ to satisfy constraints in (\ref{PB2}\rm b) and (\ref{PB2}\rm c);
		\While {all constraints in (\ref{PB2}\rm b) and (\ref{PB2}\rm c) are satisfied}
		\State Set $i = 0$, and initialize ${x_{2,0,i }}$, ${x_{6,0,i }}$, ${x_{7,0,i }}$, and ${x_{8,0,i }}$;
		\Repeat 
		\State $i  = i  + 1$;
		\State Solve the optimization objective in (\ref{PB2}) to obtain 
		\Statex \qquad \quad maximum SEE;
		\State Update $\textbf{O}_{\rm c, {\varepsilon}}$, $\textbf{O}_{n,\varepsilon}$, ${x_{2,0,i }}$, ${x_{6,0,i }}$, ${x_{7,0,i}}$, and ${x_{8,0,i }}$;
		\Until {${\rm{SE}}{{\rm{E}}_\varepsilon } - {\rm{SE}}{{\rm{E}}_{\varepsilon  - 1}} \le {\delta }$};
		\State Obtain solutions $\textbf{O}_{\rm c}$ and $\textbf{O}_{n}$;
		\State Set ${{\bf{O}}_{{\rm{c}},\varepsilon  + 1}}{\rm{ = }}{{\bf{O}}_{\rm{c}}}$ and ${{\bf{O}}_{{{n}},\varepsilon  + 1}}{\rm{ = }}{{\bf{O}}_{{n}}}$;
		
		\While {${{\bf{O}}_{{\rm{c}},\varepsilon  + 1}} \approx {{\bf{O}}_{{\rm{c}},\varepsilon }}$, and ${{\bf{O}}_{{{n}},\varepsilon  + 1}} \approx {{\bf{O}}_{{{n}},\varepsilon }}$  are not \indent \indent  satisfied}
		\State $\varepsilon  = \varepsilon  + 1$;
		\EndWhile
		
		\EndWhile
		\State Employ SVD to $\textbf{O}_{\rm c, \varepsilon}$ and $\textbf{O}_{n,\varepsilon}$, and then the transmit signal BF vectors $\textbf{o}_{\rm c}$ and $\textbf{o}_n$ are ascertained, respectively;
	\end{algorithmic}
\end{algorithm} 

\begin{algorithm}[t]
	\caption{Optimization of the SEE and the weighted sum of $\gamma_{\rm b1}$ and $R_{\rm S}$ in (\ref{P1}) based on Algorithms 1, 2, and 3.}
	\label{algorithm_4}
	
	{\algorithmicrequire} Initialization values of $\textbf{a}$, $\textbf{w}_{\rm bo}$, $\textbf{w}_{\rm ta}$, $\textbf{o}_{\rm c}$, and $\textbf{o}_{n}$, and related variables in the defined models, the number of iterations $\varepsilon $, and tolerance threshold $\delta$, respectively.
	
	{\algorithmicensure} BF vectors $\textbf{a}$, $\textbf{w}_{\rm bo}$, $\textbf{w}_{\rm ta}$, $\textbf{o}_{\rm c}$, and $\textbf{o}_{n}$, respectively.
	
	\begin{algorithmic}[1] 
		\Repeat 
		\State Carry out Algorithm \ref{algorithm_1} to obtain $\textbf{a}_{ \varepsilon+1 }$ with $\textbf{w}_{\rm bo, \varepsilon }$, \Statex\qquad$\textbf{w}_{\rm ta, \varepsilon}$, $\textbf{o}_{\rm c, \varepsilon}$, and $\textbf{o}_{n, \varepsilon}$;
		\State Carry out Algorithm \ref{algorithm_2} to obtain $\textbf{w}_{\rm bo, \varepsilon+1}$ and $\textbf{w}_{\rm ta, \varepsilon+1}$ 
		\Statex\qquad with $\textbf{a}_{ \varepsilon+1 }$, $\textbf{o}_{\rm c, \varepsilon}$, and $\textbf{o}_{n, \varepsilon}$;
		\State Carry out Algorithm \ref{algorithm_3} to obtain $\textbf{o}_{\rm c, \varepsilon+1}$ and $\textbf{o}_{n, \varepsilon+1}$ 
		\Statex\qquad with $\textbf{a}_{ \varepsilon+1 }$, $\textbf{w}_{\rm bo, \varepsilon+1}$, and $\textbf{w}_{\rm ta, \varepsilon+1}$;
		\State $\varepsilon  = \varepsilon  + 1$;
		\Until Variation of adjacent iteration results is less than $\delta$;
		\State Employ QR decomposition to obtain echo signal BF vector $\textbf{a}$, employ SVD to $\textbf{W}_{\rm bo, \varepsilon}$, $\textbf{W}_{\rm ta,\varepsilon}$, $\textbf{O}_{\rm c, \varepsilon}$, and $\textbf{O}_{n,\varepsilon}$, and then the transmit signal BF vectors $\textbf{w}_{\rm bo}$, $\textbf{w}_{\rm ta}$, $\textbf{o}_{\rm c}$, and $\textbf{o}_n$ are ascertained, respectively;
	\end{algorithmic}
\end{algorithm} 

\subsection{Complexity analysis of Algorithm \ref{algorithm_4}}
For Algorithm \ref{algorithm_4}, there exist $M_1^2 + M_2^2\left( {N + 1} \right) + 1$ original variables, $\left( {N + 6} \right)$ slack variables, 2 $M_1$-size \emph{linear matrix inequality} (LMI) constraints, $(N+1)$ $M_2$-size LMI constraints, $(N+8)$ 1-size LMI constraints, and $(N+1)$ 3-size SOC constraints. Therefore, the computational complexity of Algorithm \ref{algorithm_4} is given in (\ref{Computational_complexity}) at the top of the next page, where $D  = {M_1^2 + \left( {N + 1} \right)M_2^2 + \left( {N +7} \right)}$.

\begin{figure*}[htbp]
	\begin{equation}\label{Computational_complexity}
{\cal O}\left\{ {\sqrt {2{M_1} + \left( {N + 1} \right){M_2} + \left( {N + 8} \right)} D\left[ {M_1^2\left( {{M_1} + D} \right) + \left( {N + 1} \right)M_2^2  \left( {{M_2} + D} \right) + \left( {N + 8} \right)\left( {1 + D} \right) + 3\left( {N + 1} \right) + {D^2}} \right]} \right\},
	\end{equation}
	\hrulefill
\end{figure*}

\subsection{Convergence and Scalability of Algorithm \ref{algorithm_4}}
The proposed alternating optimization algorithm is guaranteed to converge.  This is because, the SEE is upper-bounded due to the finite channel gains, bandwidth, transmit power, and total power consumption. Furthermore, in each iteration, Algorithms 1-3 optimally solve their respective subproblems, leading to a non-decreasing sequence of the objective value SEE. Hence, Algorithm 4 that is non-decreasing and bounded above can be guaranteed to converge to a finite limit, i.e. a feasible solution \cite{optimization}.

The proposed algorithm not only achieves a notable balance between complexity and reliability, but also has the scalability. On the one hand, for a better algorithmic acceleration, we can reduce iterations by 35-40\% using previous solutions, stop optimization when SEE improvement $\le$ 0.1\% per iteration, and exploit inherent parallelism in Algorithms 1-3 to fulfill parallel subproblem solution. On the other hand, we can use GPU acceleration since the matrix operations in Algorithms 1-3 are highly amenable to parallel processing. The alternating optimization structure naturally supports the distributed implementation fixed-point arithmetic is adopted owing to its potential for significant speedup with the acceptable precision loss.


\begin{table*}[htbp]
	\centering
	\caption{\label{tab:parameter}Main simulation parameters \cite{ISAC_PLS_14,ISAC_PLS_13}.}
	\begin{tabular}{cc||cc}
		\hline\hline 
		\textbf{Parameter} & \textbf{Value} & \textbf{Parameter} & \textbf{Value} \\
		\hline\hline
		Baseband signal bandwidth (${f_{\rm m}}$)	& 80 MHz  & Carrier frequency ($f_{\rm c}$)	& 18 GHz \\
		\hline
		Number of transmit antennas at $\rm BS_1$ ($M_1$) 	& 12 &Number of transmit antennas at $\rm BS_2$ ($M_2$) 	& 12\\
		\hline
		Coverage region of a cell ($r$) 	& 120 m & Number of multicast communication users ($N$) & 4 \\
		\hline
		Number of ISAC sensing target & 1 & Number of ISAC communication user  & 1  \\
		\hline
		Sensing SCNR threshold ($\gamma_{\rm th}$)	& 0.5 & Rate thresholds ($I_{\rm S}$, $I_{\rm c}$, $I_{\rm p}$) & 1 bps/Hz  \\
		\hline
		Maximum transmit power at BSs ($P_{\rm 1,max}$, $P_{\rm 2,max}$) & 25 dBW  & Noise power at each node ($\sigma^2$) & -80 dBW   \\
		\hline\hline 
	\end{tabular}
\end{table*}
\section{Numerical Results and Discussions}
In this section, simulations are performed to illustrate the effectiveness of the presented heterogeneous ISAC and the BF optimization scheme. The key simulation parameters are provided in Table \ref{tab:parameter}, respectively. Note that we compare the proposed RSMA-based scheme against the \emph{power-domain NOMA} (PD-NOMA), which employs the SIC at receivers and optimizes the power allocation among users. While the NOMA literature includes variants with explicit multi-casting capabilities, the benchmark focuses on the widely-adopted uni-casting oriented PD-NOMA to maintain a clear and interpretable comparison framework.


\begin{figure*}[htbp]
	\centering
	\subfigure[SEE vs. iteration number vs. $M_1$ for different convergence schemes.]{
		\label{iteration_com}
		\begin{minipage}[t]{0.25\linewidth}
			\centering
			\includegraphics[width=2in]{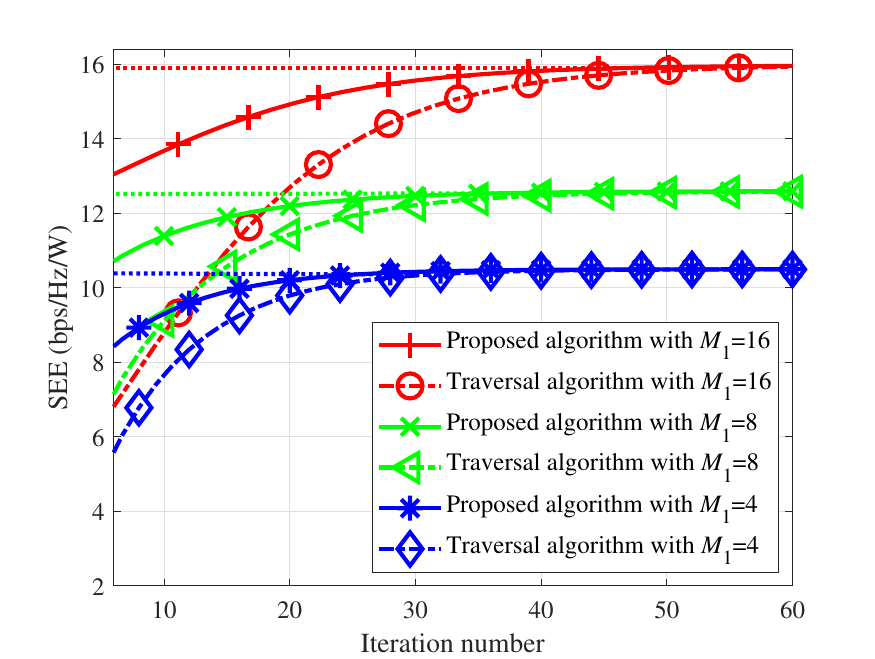}
		\end{minipage}%
	}%
	\subfigure[$\gamma_{\rm b1}$ vs. $P_{\rm 1}$ for different multiple access schemes.]{
		\label{SCNR_com}
		\begin{minipage}[t]{0.25\linewidth}
			\centering
			\includegraphics[width=2in]{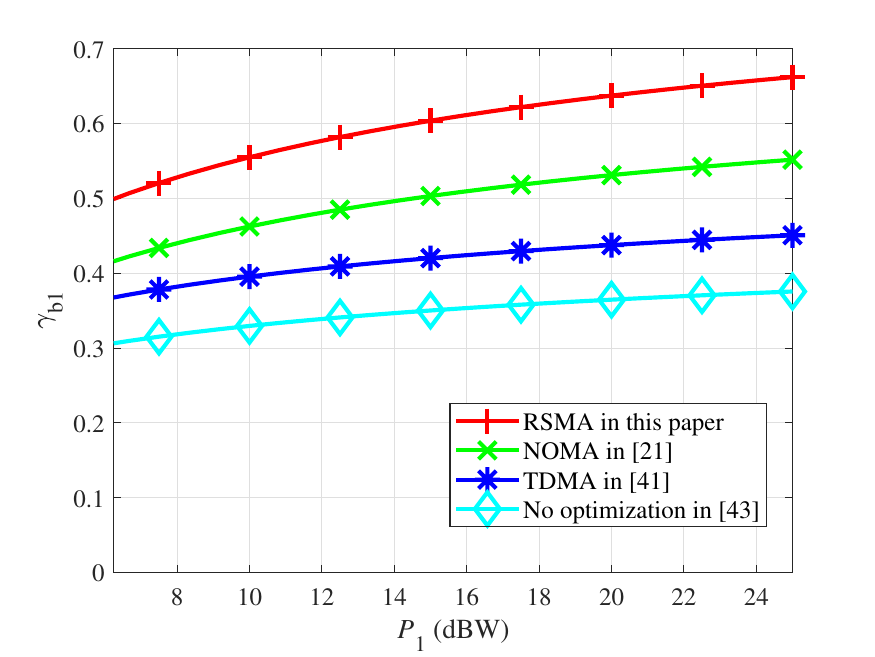}
		\end{minipage}%
	}%
	\subfigure[SEE vs. $P_{\rm 2}$ for different CSI uncertainty parameters.]{
		\label{SEE_ICSI_com}
		\begin{minipage}[t]{0.25\linewidth}
			\centering
			\includegraphics[width=2in]{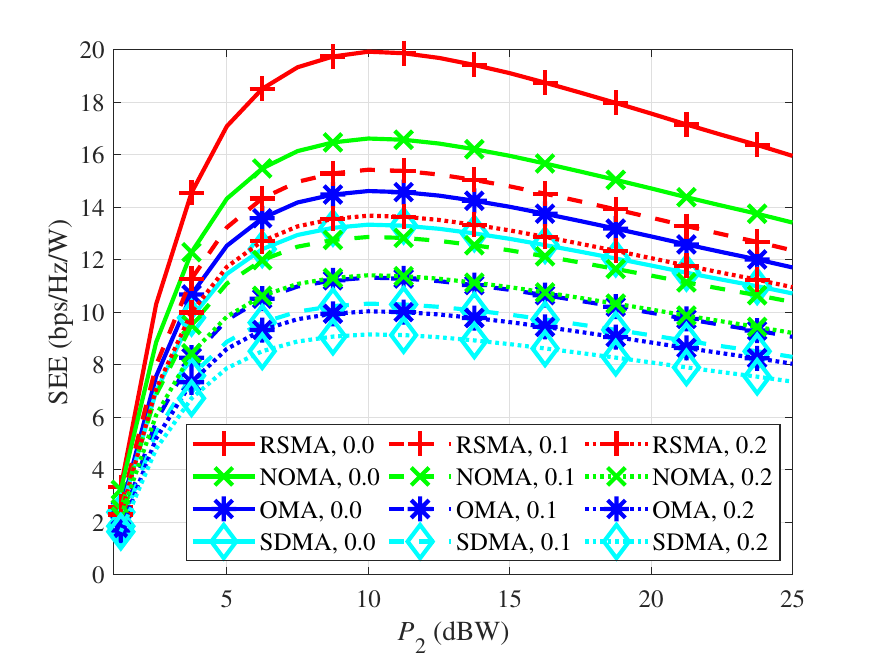}
		\end{minipage}
	}%
    \subfigure[SCNR-security-rate Pareto front.]{
    	\label{SCNR_RS}
    	\begin{minipage}[t]{0.25\linewidth}
    		\centering
    		\includegraphics[width=5cm]{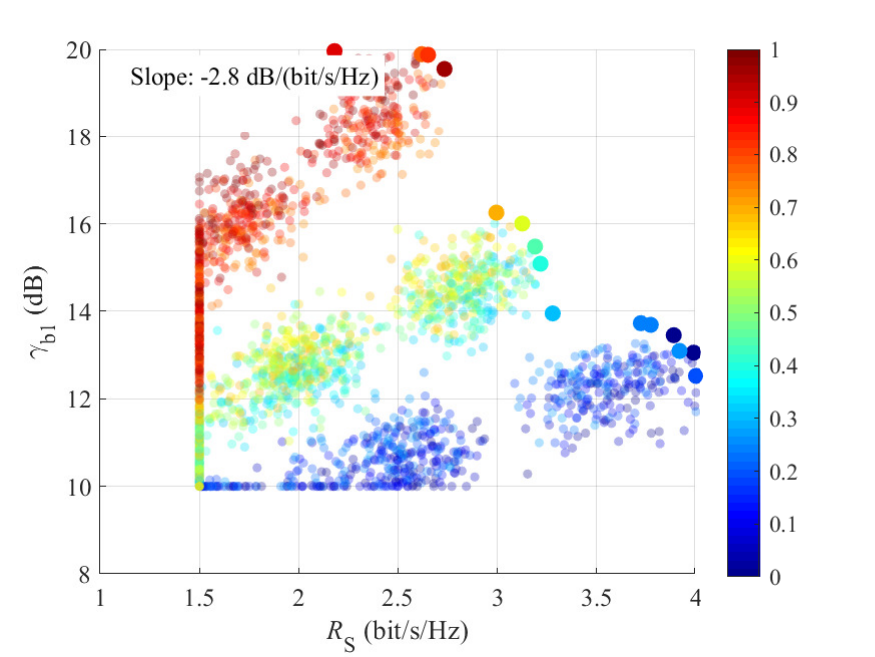}
    	\end{minipage}%
    }%
	\centering
	\caption{Illustration of the basic performance.}
	\label{fourfig_1}
\end{figure*}

\begin{figure*}[htbp]
	\centering
	\subfigure[SEE vs. $P_{\rm 1}$ for different transmit BF optimization schemes at $\rm BS_1$.]{
		\label{SEE_opti_com}
		\begin{minipage}[t]{0.25\linewidth}
			\centering
			\includegraphics[width=2in]{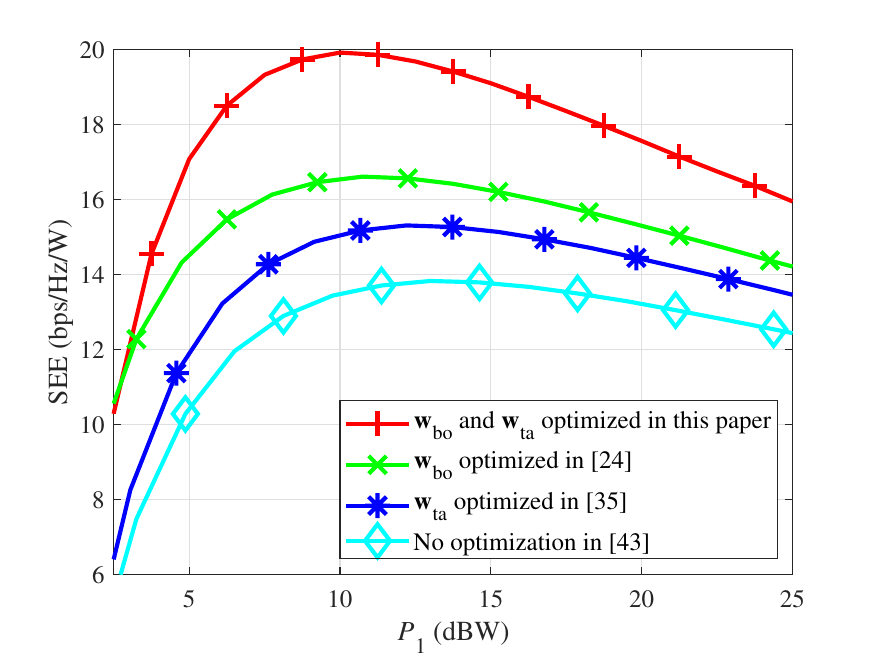}
		\end{minipage}%
	}%
\subfigure[SEE vs. $P_2$ vs. ($M_1$,$M_2$) with RSMA.]{
	\label{SEE_antenna}
	\begin{minipage}[t]{0.25\linewidth}
		\centering
		\includegraphics[width=2in]{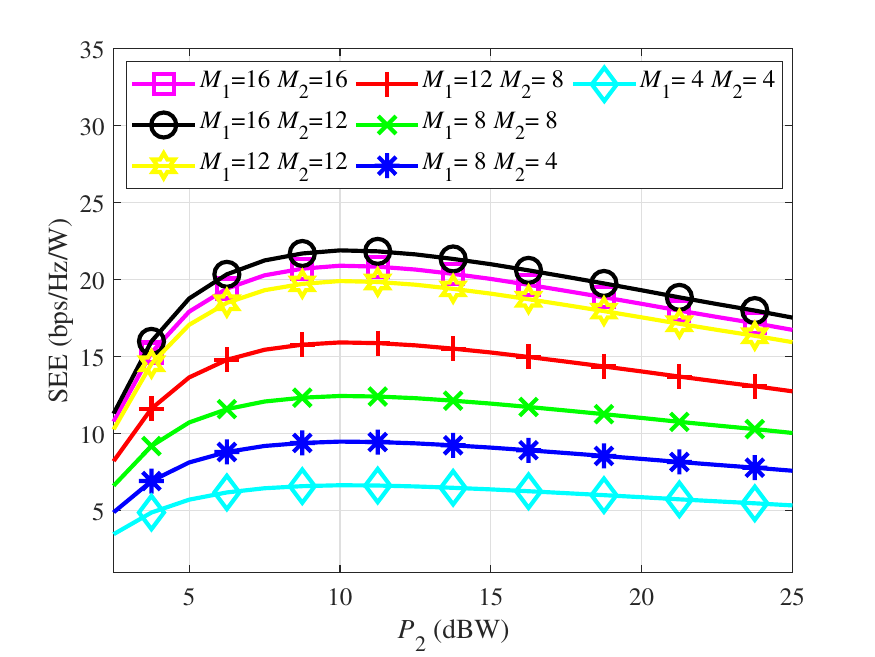}
	\end{minipage}%
}%
	\subfigure[SEE vs. $P_2$ vs. $N$ with RSMA.]{
		\label{SEE_user}
		\begin{minipage}[t]{0.25\linewidth}
			\centering
			\includegraphics[width=2in]{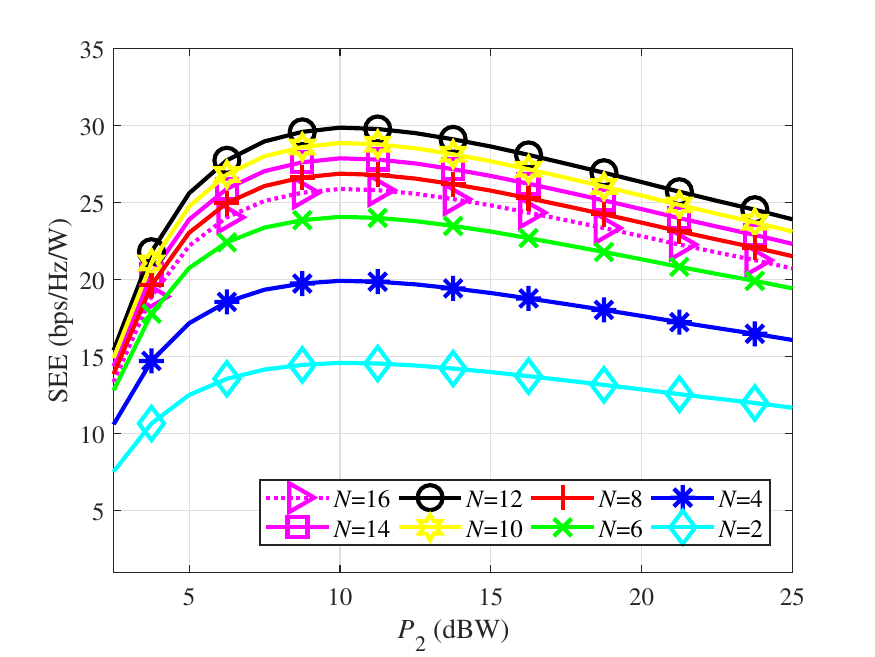}
		\end{minipage}%
	}%
	\subfigure[SEE vs. $P_2$ vs. ($I_{\rm S}$,$I_{\rm c}$,$I_{\rm p}$) with RSMA.]{
		\label{SEE_thresholds}
		\begin{minipage}[t]{0.25\linewidth}
			\centering
			\includegraphics[width=2in]{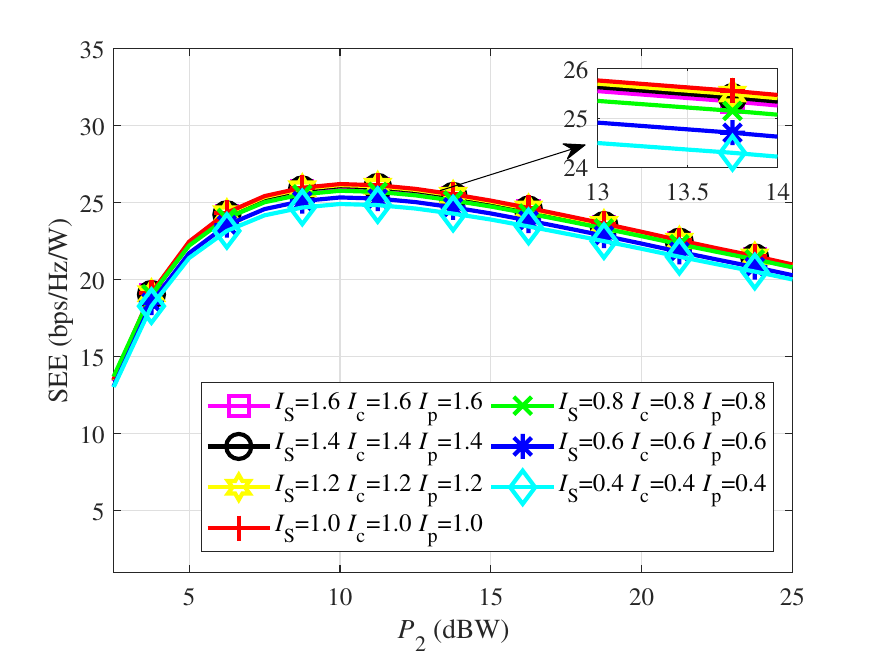}
		\end{minipage}
	}%
	\centering
	\caption{Sensitive parameter validation.}
	\label{sensitive_parameter}
\end{figure*}

\begin{figure*}[htbp]
	\centering
	\subfigure[Processing latency under different schemes.]{
		\label{SEE_latency}
		\begin{minipage}[t]{0.25\linewidth}
			\centering
			\includegraphics[width=2in]{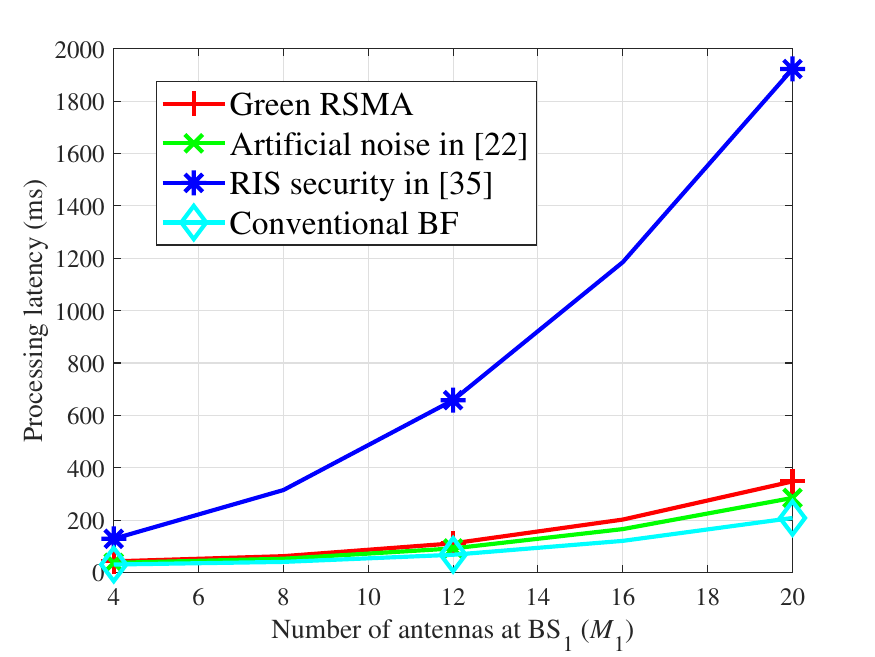}
		\end{minipage}%
	}%
	\subfigure[Hardware implementation requirement under different schemes.]{
		\label{SEE_memory}
		\begin{minipage}[t]{0.254\linewidth}
			\centering
			\includegraphics[width=2in]{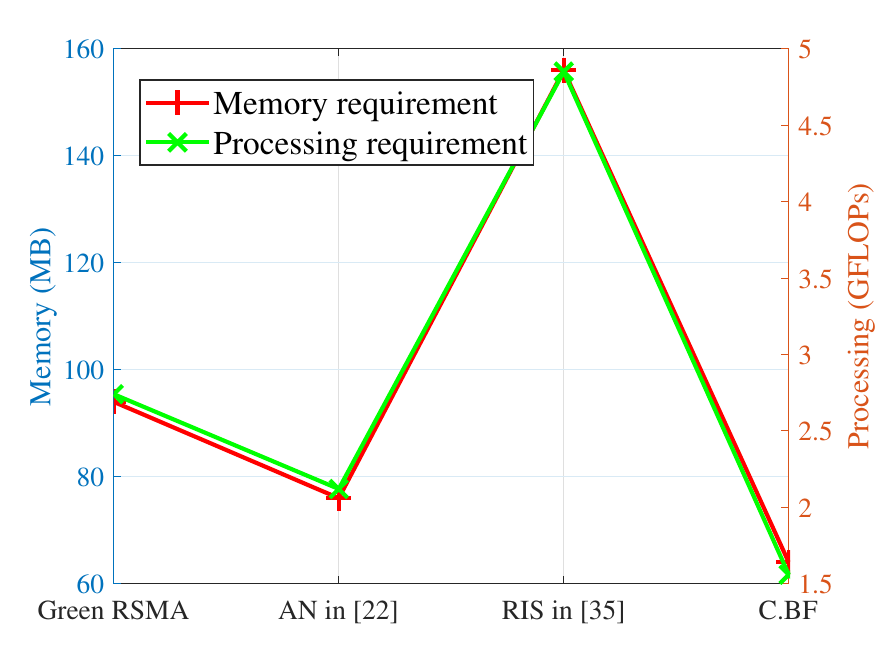}
		\end{minipage}%
	}%
	\subfigure[Cross-cell SINR performance under different schemes.]{
		\label{SEE_cross_interference}
		\begin{minipage}[t]{0.246\linewidth}
			\centering
			\includegraphics[width=2in]{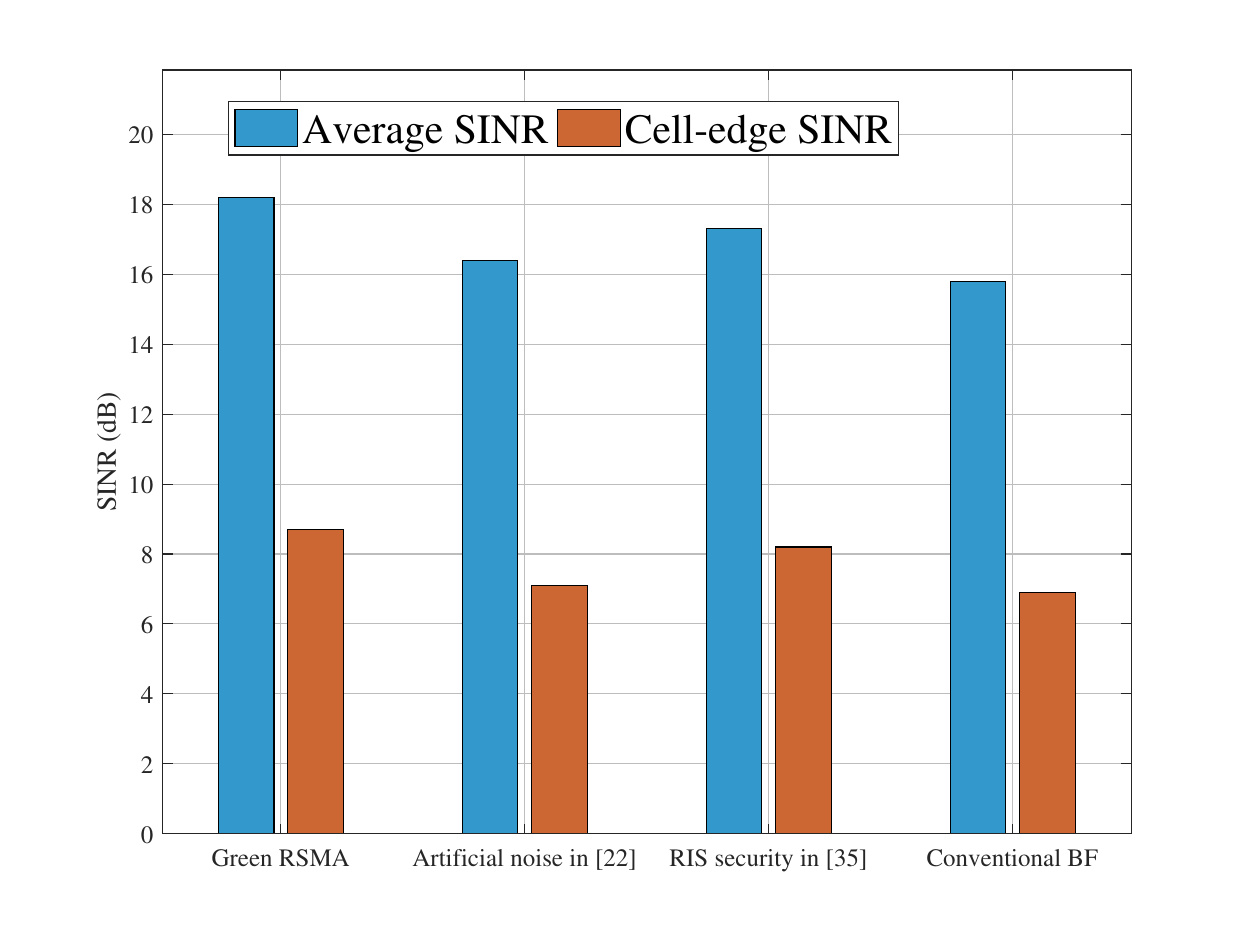}
		\end{minipage}
	}%
	\subfigure[Runtime vs. number of antennas under different schemes.]{
		\label{SEE_runtime_antenna}
		\begin{minipage}[t]{0.25\linewidth}
			\centering
			\includegraphics[width=2in]{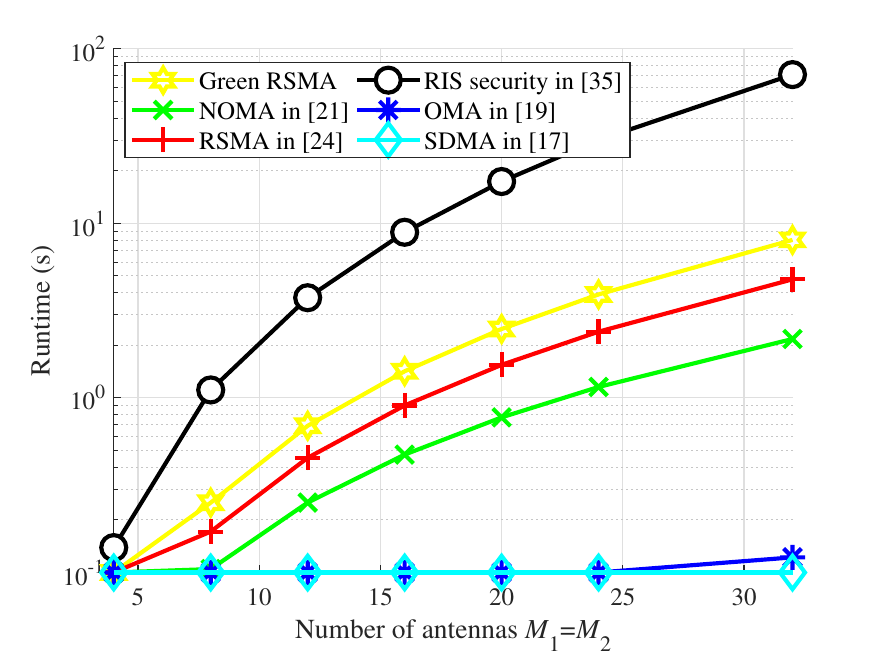}
		\end{minipage}
	}%
	\centering
	\caption{Effectiveness and scalability validation.}
	\label{algorithm_cost}
\end{figure*}

\begin{figure*}[htbp]
	\centering
	\subfigure[Centralized deployment.]{
		\label{central_de}
		\begin{minipage}[t]{0.25\linewidth}
			\centering
			\includegraphics[width=2in]{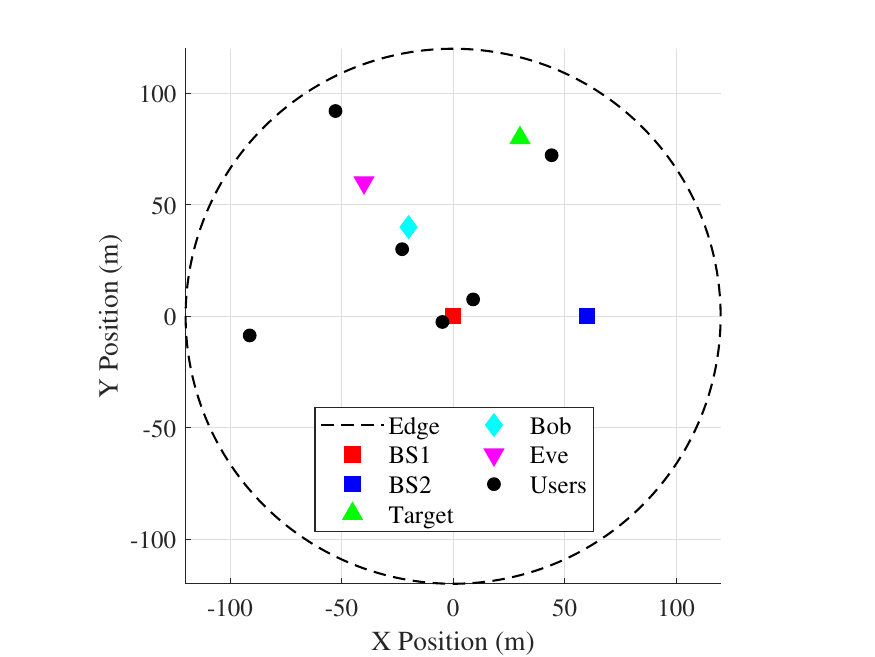}
		\end{minipage}%
	}%
	\subfigure[Edge-clustered deployment.]{
		\label{edge_de}
		\begin{minipage}[t]{0.25\linewidth}
			\centering
			\includegraphics[width=2in]{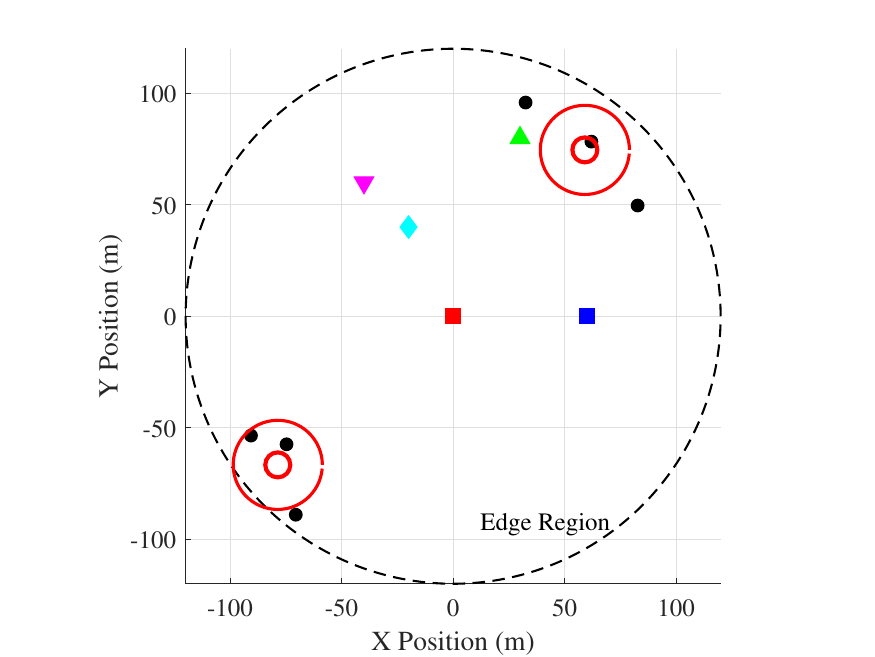}
		\end{minipage}%
	}%
	\subfigure[Hybrid deployment.]{
		\label{hybrid_dp}
		\begin{minipage}[t]{0.25\linewidth}
			\centering
			\includegraphics[width=2in]{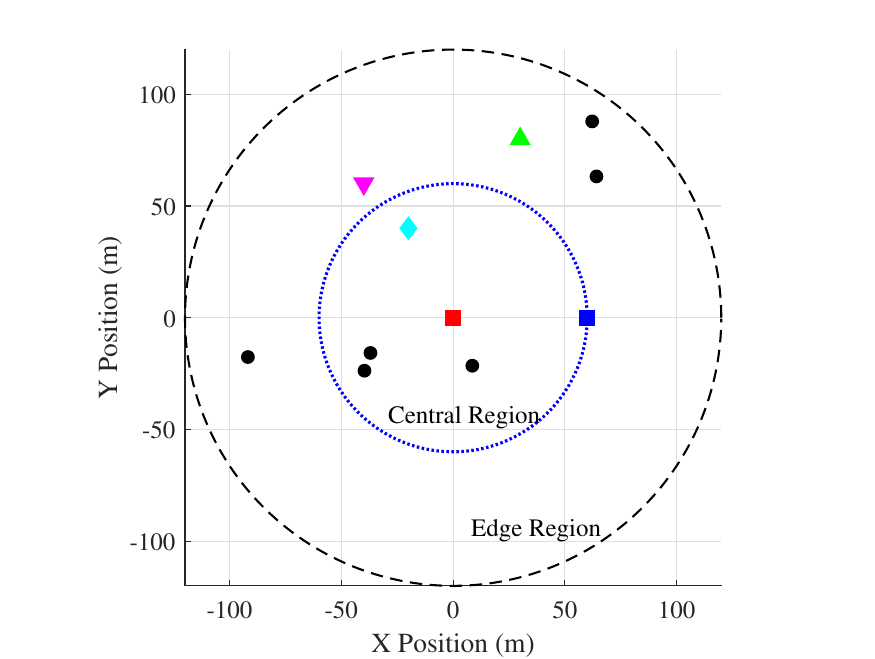}
		\end{minipage}
	}%
	\subfigure[SEE vs. multiple access vs. deployment scenario.]{
		\label{SEE_topology}
		\begin{minipage}[t]{0.25\linewidth}
			\centering
			\includegraphics[width=2in]{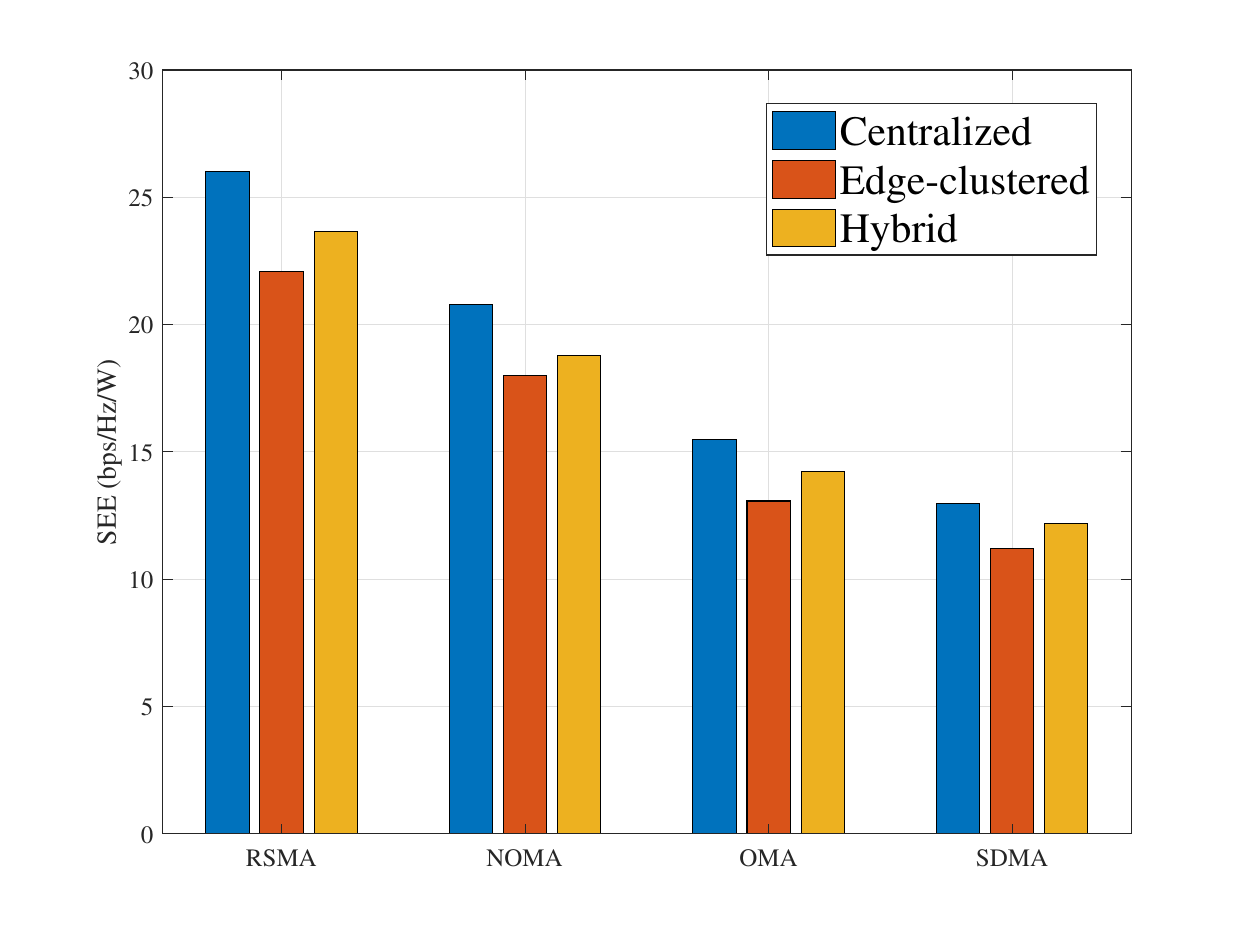}
		\end{minipage}
	}%
	\centering
	\caption{Robustness validation for varying network topologies.}
	\label{topo_com}
\end{figure*}

\begin{figure*}[htbp]
	\centering
	\subfigure[SEE vs. $P_{\rm 2}$ for different channel fading models.]{
		\label{channel_model_com}
		\begin{minipage}[t]{0.25\linewidth}
			\centering
			\includegraphics[width=5cm]{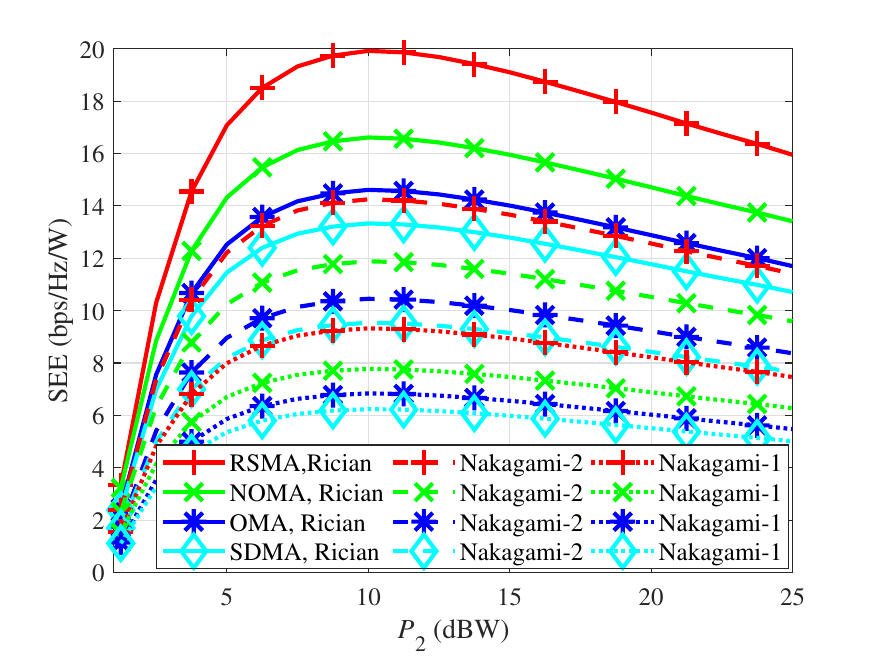}
		\end{minipage}%
	}%
\subfigure[SEE vs. $P_{\rm 2}$ for different uncertainty models.]{
	\label{uncertainty_model}
	\begin{minipage}[t]{0.25\linewidth}
		\centering
		\includegraphics[width=5cm]{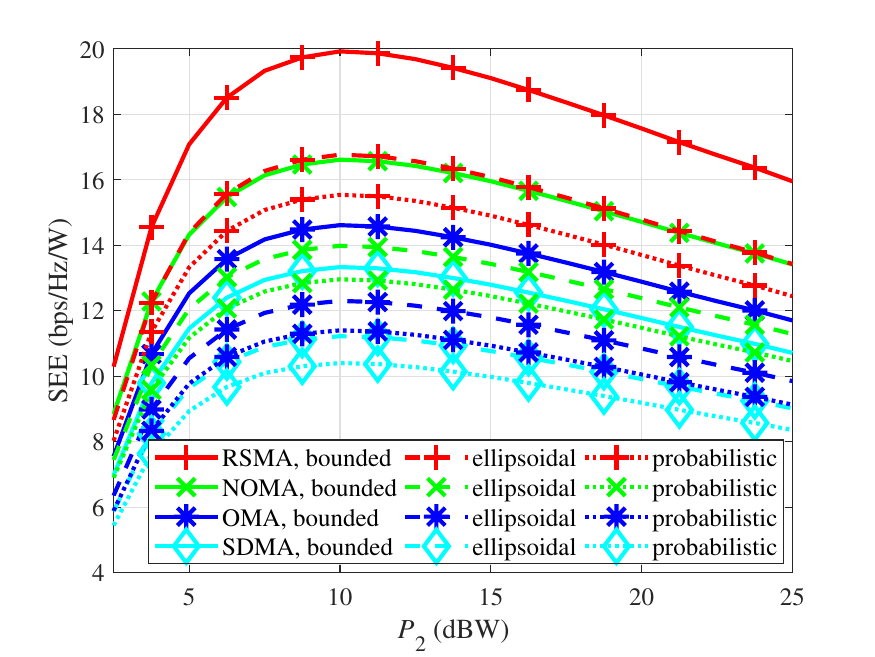}
	\end{minipage}%
}%
  \subfigure[2D beampattern of common stream.]{
  	\label{common_stream_2D}
  	\begin{minipage}[t]{0.25\linewidth}
  		\centering
  		\includegraphics[width=2in]{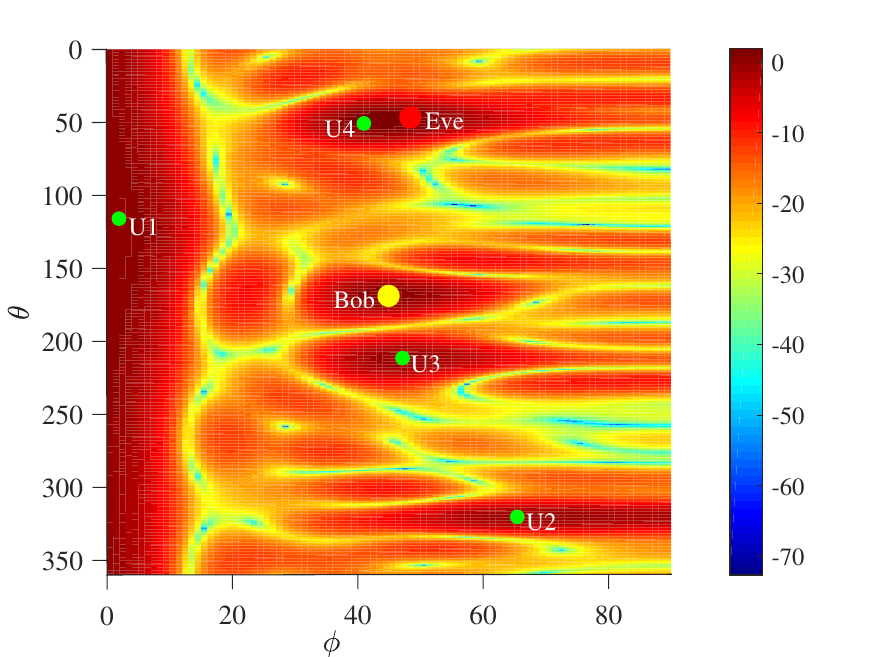}
  	\end{minipage}%
  }%
    \subfigure[2D beampattern of private stream.]{
    	\label{private_stream_2D}
    	\begin{minipage}[t]{0.25\linewidth}
    		\centering
    		\includegraphics[width=2in]{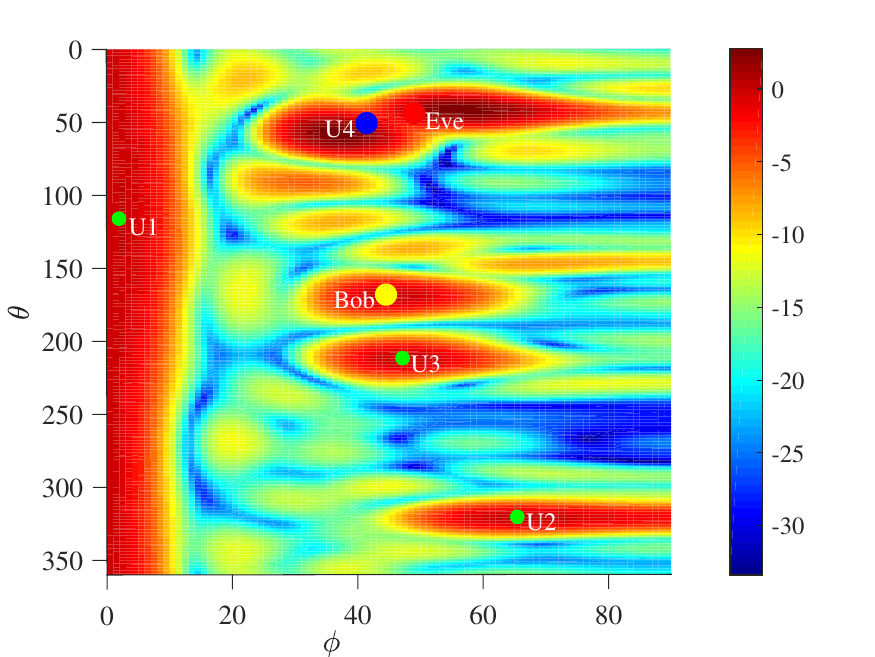}
    	\end{minipage}
    }%
	\centering
	\caption{Robustness validation for adversarial models and RSMA beampattern illustration.}
	\label{four_fig}
\end{figure*}

The convergence performance of the SEE improvement in the heterogeneous ISAC is shown by the solid line in Fig. \ref{iteration_com}, and the traversal algorithm is depicted by the dashed line in contrast.
For a fixed  $M_1$, the SEE gradually increases with the increase of the number of iterations. After an acceptable number of iterations, SEE converges. For a given number of iterations, SEE increases gradually with the increase of $M_1$, since the transmit power carried by a single antenna is constant. The greater $M_1$ is, the greater the maximum transmit power becomes, which is translated to an increase of the spatial degrees of freedom and BF design accuracy and efficiency, leading to a greater $R_{\rm S}$ for a fixed transmit power. Then, SEE also increases. As $M_1$ increases, the number of iterations required to solve the SEE gradually increases. This is because as $M_1$ increases, the computational complexity of the solution also increases; thus the number of iterations increases. For Algorithm \ref{algorithm_4} and traversal algorithm, as $M_1$ increases, the numbers of iterations required to achieve convergence are 24, 30, 39, and 30, 40, and 52, respectively. Obviously, compared with the traversal algorithm, under similar reliability condition and for fixed $M_1$, Algorithm \ref{algorithm_4} requires fewer iterations to achieve convergence, achieving better convergence performance and effectiveness. This experimental result corroborates our theoretical analysis and validates the practical convergence of the proposed Algorithm \ref{algorithm_4}.


In Fig. \ref{SCNR_com}, we show the relationship between $\gamma_{\rm b1}$ and $P_{\rm 1}$ under different BF optimization schemes. It can be seen that under the same BF scheme, $\gamma_{\rm b1}$ increases with the increase of $P_{\rm 1}$, since the signal power received by the target and reflected to $\rm BS_1$ increases with the increase of $P_{\rm 1}$ for a given RCS. Meanwhile, the clutter power and noise power in the environment are relatively stable, and thus $\gamma_{\rm b1}$ enhances. In addition, the proposed RSMA-based BF scheme outperforms NOMA in \cite{ISAC_PLS_46_r}, TDMA in \cite{ISAC_PLS_54}, and non-optimized echo signal in \cite{ISAC_PLS_13}. This is because the proposed SEE BF optimization scheme continuously receives the echo signal in the time dimension, optimizes the BF vector of the echo signal in the space dimension, and simultaneously uses the sensing signal and the communication signal for sensing the target in the power dimension. However, the other three schemes either neglect to optimize the BF vector of the echo signal, or do not continuously or stably receive the target information, or only use the sensing signal for target identification. Therefore, the proposed multi-BF optimization scheme outperforms the contributions of \cite{ISAC_PLS_46_r,ISAC_PLS_54,ISAC_PLS_34} in terms of sensing.

On the one hand, Fig. \ref{SEE_ICSI_com} presents the relationship between SEE and $P_{\rm 2}$ under different multiple access schemes. Under the same multiple access scheme, with the increase of $P_{\rm 2}$, SEE shows a change process of first increasing and then decreasing. This phenomenon is reasonable and based on the definition of SEE \cite{ISAC_PLS_13}. The second derivative of SEE with respect to $P_{\rm 2}$ is less than 0, implying SEE is a convex function of $P_{\rm 2}$. For a given $P_{\rm 2}$, RSMA obtains the best SEE among four multiple access schemes.
On the other hand, Fig. \ref{SEE_ICSI_com} demonstrates the relationship between SEE and $P_{\rm 2}$ under different CSI uncertainty parameters. As can be observed, when the legitimate link CSI errors are introduced ($ e_{\rm h}$=$ e_{\rm g}$=0.0, 0.1, 0.2), the SEE experiences a modest degradation of 8-12\% compared to the baseline case with $ e_{\rm h}$=$ e_{\rm g}$=0.0.  This demonstrates that the proposed scheme maintains reasonable performance even under more realistic CSI conditions. Crucially, the performance advantage of our proposed RSMA scheme over benchmark schemes remains consistent across all CSI uncertainty levels. RSMA-based approach consistently shows 18-25\% higher SEE than NOMA and 30-35\% higher SEE than SDMA and OMA, regardless of the comprehensive CSI error conditions. The constructed green interference maintains its effectiveness in degrading Eve's reception while protecting Bob's confidential communication, even when all channel links have estimation errors. This confirms the inherent robustness of the security mechanism against practical implementation challenges.
Moreover, given that SEE is related to the difference between Bob’s and Eve’s rates, uncertainty in Eve’s CSI directly affects the numerator of SEE expression, i.e., the potential eavesdropping capacity.  An underestimation of Eve’s channel gain could severely compromise security.  Therefore, our robust design prioritizes safeguarding against uncertainties in the Eve's links.

As shown in Fig. \ref{SCNR_RS}, the Pareto front visualization reveals the fundamental performance boundary between sensing SCNR and security rate in our heterogeneous ISAC system. The convex shape of the frontier demonstrates the inherent trade-off between these two critical objectives, with each point representing an optimal solution where neither metric can be improved without degrading the other. The color gradient (blue to red) indicates the weighted parameter λ₂ (0.1~0.3, 0.4~0.6, 0.7~0.9), showing how different weight allocations shift the operating point along the efficiency frontier. The calculated trade-off slope of -2.8 dB/(bps/Hz) in the balanced region provides quantitative evidence of the performance exchange rate, indicating that approximately 2.8 dB of SCNR must be sacrificed for each 1 bps/Hz improvement in security rate under typical operating conditions. The well-defined Pareto boundary confirms that our optimization framework effectively captures the fundamental performance limits, while the continuous nature of the frontier demonstrates the scheme’s ability to achieve smooth transitions between sensing-dominant and security-dominant operational modes.

Fig. \ref{SEE_opti_com} presents the relationship between SEE and $P_{\rm 1}$ under different transmit BF optimization schemes. For a fixed multiple access scheme, as $P_{\rm 1}$ increases, the SEE shows a change process of first increasing and then decreasing. The effect of $P_{\rm 1}$ on SEE can be referred to that of $P_{\rm 2}$ on SEE. For a given $P_{\rm 1}$, compared with the optimization of $\textbf{w}_{\rm bo}$ in \cite{ISAC_PLS_48}, optimization of $\textbf{w}_{\rm ta}$ in \cite{ISAC_PLS_33}, and no optimization in \cite{ISAC_PLS_54}, the proposed joint optimization of $\textbf{w}_{\rm bo}$ and $\textbf{w}_{\rm ta}$ achieves a higher SEE. 
Optimization of $\textbf{w}_{\rm ta}$ transforms the harmful interference that affects communication into the green interference that affects eavesdropping with a smaller eavesdropping rate. Optimization of $\textbf{w}_{\rm bo}$ improves the signal quality received by Bob and $R_{\rm S}$. Finally, the proposed joint optimization of $\textbf{w}_{\rm bo}$ and $\textbf{w}_{\rm ta}$ fully demonstrates the advantage in enhancing SEE compared to schemes in \cite{ISAC_PLS_48,ISAC_PLS_33,ISAC_PLS_54}.

As shown in Fig. \ref{SEE_antenna}, our expanded antenna configuration simulations demonstrate several key patterns. Firstly, we observe diminishing marginal returns, i.e., increasing ($M_1$,$M_2$) from (4,4) to (12,12) improves SEE by approximately 198.87\%, while further increasing to (16,12) yields only 9.99\% additional gain. When the antenna configuration is set as (16,16), the SEE stops to increase but instead experiences a 4.52\% decline compared to (16,12). Secondly, as mentioned in the first point, the reasonable performance occurs in balanced configurations (12,12). Compared with the balanced configurations (12,12), the configuration (16,12), that is, the increase in $M_1$, still achieves performance gain. However, when configured as (16,16), the increase in $M_2$ actually leads to a decrease in performance gain. The above situation indicates that $M_1$ plays a more important role in improving SEE. Thirdly, we consider a more practical model, where the circuit power consumption model is positively correlated with ($M_1$,$M_2$), introducing a crucial trade-off that explains the saturation behavior.

Fig. \ref{SEE_user} depicts the relationship between the number of users $N$ and the SEE. It can be observed that, for a fixed $P_2$, as $N$ gradually increases from 2, SEE also gradually increases. When $N$ reaches 12, SEE reaches its peak. Subsequently, as $N$ continues to increase, SEE gradually decreases. Regarding the observed increase of SEE with $N$ increasing, this is a distinctive feature of our proposed green interference scheme. Unlike conventional systems where additional users typically degrade performance due to increased resource competition, our RSMA-based framework strategically leverages the multi-user signals to construct beneficial green interference against the eavesdropper. However, the situation where the security enhancement brought about by the increase in $N$ exceeds the moderate power increase required to serve additional users is not unconditional. As $N$ exceeds a certain threshold, SEE will gradually decrease when the enhanced security cannot fully compensate for the increased power required to serve additional users.

Considering that in the absence of precise business requirement information, adopting a balanced threshold configuration is a robust design strategy, Fig. \ref{SEE_thresholds} explores the system performance when the three thresholds are set to the same value and clearly indicates the optimal operating point. The peak performance was achieved at $I_{\rm S}$=$I_{\rm c}$= $I_{\rm p}$ = 1.0 bps/Hz, indicating that balanced quality of service requirements can produce the best system synergy. Balanced configuration prevents any service dimension from becoming a performance bottleneck, enabling the optimization algorithm to evenly allocate resources across all dimensions. Moreover, the performance degradation within the range of 1.0-1.7 bps/Hz is relatively gentle, providing sufficient flexibility for the actual system configuration.

As shown in Fig. \ref{SEE_latency}, latency analysis confirms that the proposed scheme meets critical 5G requirements. With 0.84 ms processing time for $M_1$=12, the scheme comfortably satisfies both URLLC ($\le$1 ms) and eMBB ($\le$4 ms) latency budgets. The latency scaling with antenna count remains manageable, reaching 1.52 ms for $M_1$=20 configurations. The additional processing for green interference construction adds only 0.18 ms compared to conventional beamforming, representing a reasonable trade-off for the achieved security and energy efficiency improvements. The computational overhead of green interference is well within 5G latency constraints, ensuring practical deployability in real-time systems.

Fig. \ref{SEE_memory} demonstrates that the proposed scheme is implementable on the existing BS hardware. Memory requirement is 94 MB that is well within typical 4-8 GB BS memory. Processing throughput is 2.74 GFLOPs, that is achievable with modern DSPs. Standard is fully compatible with 3GPP Release 16/17 frameworks. As shown in Fig. \ref{SEE_memory}, hardware requirement analysis confirms practical implementability on contemporary BS platforms. The proposed scheme requires 94 MB memory and 2.74 GFLOPs processing capability, well within specifications of commercial BS processors (e.g., Xilinx Zynq UltraScale+ RFSoC). The memory requirement represents only 2.4\% of typical 4 GB BS memory, while the processing requirement is 58\% of typical 5 GFLOPs DSP capacity. The computational demands of green interference are compatible with existing 5G infrastructure, requiring no hardware upgrades.

As illustrated in Fig. \ref{SEE_cross_interference}, multi-cell deployment analysis reveals that green interference actually improves cross-cell interference characteristics. The proposed scheme achieves average SINR of 18.2 dB and cell-edge SINR of 8.7 dB, outperforming both artificial noise (16.4 dB average, 7.1 dB cell-edge), RIS-security (17.3 dB average, 8.2 dB cell-edge), and conventional schemes (15.8 dB average, 6.9 dB cell-edge). This counterintuitive improvement stems from the directed nature of green interference, which minimizes spillover to adjacent cells compared to omnidirectional artificial noise. Properly designed green interference reduces rather than increases cross-cell interference, addressing a fundamental concern in multi-cell deployments.

As demonstrated in Fig. \ref{SEE_runtime_antenna}, the runtime characteristics exhibit distinct complexity scaling patterns across different schemes. The proposed green-RSMA scheme requires approximately 3.2 seconds for ($M_1$,$M_2$)=(12,12) antennas, which represents a practical compromise between performance and computational efficiency. In comparison, RIS-security exhibits the steepest complexity growth, requiring 4.6 seconds for (12,12) antennas and escalating rapidly to 28.3 seconds for (32,32).  NOMA and RSMA in \cite{ISAC_PLS_48} show moderate complexity, with runtimes of 2.8 and 3.1 seconds for (12,12). OMA and SDMA demonstrate the most favorable runtime characteristics, completing in 2.1 and 1.9 seconds respectively for (12,12), albeit at the cost of significantly reduced SEE performance.

Furthermore, to fully demonstrate the topological robustness of our proposed scheme, we simulate and verify the differences in SEE between the RSMA adopted and the NOMA, OMA, and SDMA as comparison objects in three typical scenarios, i.e., centralized, edge-clustered, and hybrid deployments, shown in Figs. \ref{central_de}, \ref{edge_de}, \ref{hybrid_dp}, and the comparison regarding SEE is displayed in Fig. \ref{SEE_topology}. Results demonstrate that while absolute performance metrics vary across these topologies with the edge-clustered scenario showing approximately 18\% lower SEE compared to the centralized case due to more severe interference conditions. The proposed RSMA-based scheme consistently maintains its performance advantage over benchmark schemes (NOMA, OMA, SDMA) in all tested scenarios. This robustness stems from RSMA's inherent capability to manage inter-user interference through intelligent rate-splitting and our optimized BF design that dynamically adapts to user distribution patterns.

As illustrated in Fig. \ref{channel_model_com}, the proposed RSMA-based green interference scheme consistently achieves the highest SEE across all three channel fading models, which validates its robustness against different fading characteristics. However, the absolute performance and the performance gap between different schemes are highly dependent on the channel model. For the performance in the Rician-shadowed fading, in this model, which features a dominant (though potentially shadowed) LoS path, all schemes perform at their best. The proposed scheme leverages the stable spatial characteristics afforded by the LoS component to precisely shape beams and construct effective green interference, resulting in the highest SEE. For the performance in Nakagami-2 fading, when the channel lacks a deterministic LoS component, we observe a noticeable performance degradation for all schemes. For the performance in severe fading (Nakagami-1/Rayleigh), under the most severe Rayleigh fading condition, the SEE for all schemes drops substantially.

To further verify the robustness of the adopted RSMA, we compared the SEE under bounded, ellipsoidal, and probabilistic models in Fig. \ref{uncertainty_model}, where simulations validate that the performance degradation remains bounded with SEE reduction limited to 22\% in extreme conditions. Furthermore, it can be concluded from Fig. \ref{uncertainty_model} the proposed green-RSMA scheme has achieved the best SEE among the three uncertainty models.


In Figs. \ref{common_stream_2D} and \ref{private_stream_2D}, yellow, red, green, and blue dots are positions for Bob, Eve, users, and the user wanting this private stream.
Fig. \ref{common_stream_2D} depicts the \emph{two-dimensional} (2D) beampatterns of the common stream. There are 4 significant and relatively close amplitude gains with values greater than $\rm 0\,dB$. This is because $\rm BS_2$ serves 4 communication users, all of whom require the common stream. Therefore, the amplitude gain of the common stream for 4 users is greater than $\rm 0\,dB$, ensuring that the power of the transmit common stream gets amplified. On the other hand,, amplitude gains of common stream transmitted to Bob and Eve is close to $\rm 0\,dB$. The optimization of $\textbf{o}_{\rm c}$ in this paper improves the signal power received by each user, and overcomes the fading problem in the signal transmission process. Therefore, each user obtains the common stream reliably while the interference from the common stream exerted on the signal reception of Bob and Eve is limited.

Fig. \ref{private_stream_2D} illustrate the 2D beampatterns of the private stream. In Fig. \ref{private_stream_2D}, it can be found that among 6 amplitude gains, the amplitude gain of the fourth user is significantly greater than $\rm 0\,dB$, and the amplitude gains of other uses are close to or less than $\rm 0\,dB$. This is because a private stream is only useful for one user and harmful to other users. Therefore, when a private stream is provided to the fourth user, it is necessary to increase the amplitude gain of the fourth user and enhance the signal power received by the fourth user. Meanwhile, the amplitude gains of the other users is cut down, decreasing the received interference power, and thus improving the received SINR. In addition, the amplitude gain for Bob is less than $\rm 0\,dB$, and the interference received by Bob is suppressed, while the amplitude gain for Eve is greater than $\rm 0\,dB$, causing a stronger interference to Eve to improve the SEE. Hence, the optimization of $\textbf{o}_n$ improves the SINR of users and the SEE of the system.



\section{Conclusion}
This paper proposed a heterogeneous ISAC employing ISAC communication, sensing as well as  multicast communications. Given the challenges posed by interference, eavesdropping and imperfect CSI, we qualified sensing, communication, security, and EE. To enhance these metrics, we developed a green interference scheme by leveraging the ISAC sensing signal and the RSMA communication signal to maximize the SEE and study the trade-off between sensing and security, which was realized in the form of multi-BF optimization at the mathematical level. 
Since the original optimization problem was highly non-convex, posing significant challenges for reliable and efficient solutions, we decomposed it into three tractable sub-optimization problems using Taylor series expansion, MM, SDP, and SCA, and then solved it with alternating optimization. Finally, simulations validated the superior efficiency, robustness, and scalability
of our designs in achieving secure and green sensing and communications in challenging scenarios in presence of interference, eavesdropping and imperfect CSI.



\begin{appendices}

\section{Derivation of Bounds for CSI Error Vectors}
The signal power term in the SINR expressions like $\left\| {{\bf{h}}_{\rm{e}}^{\rm{H}}{{\bf{w}}_{{\rm{bo}}}}} \right\|_2^2$ involve a quadratic form with uncertain channel
\begin{equation}
	\left\| {{{\left( {{{\bf{h}}_{{\rm{es}}}} + {{\bf{h}}_{{\rm{er}}}}} \right)}^{\rm{H}}}{{\bf{w}}_{{\rm{bo}}}}} \right\|_2^2 = {\bf{w}}_{{\rm{bo}}}^{\rm{H}}\left( {{{\bf{h}}_{{\rm{es}}}} + {{\bf{h}}_{{\rm{er}}}}} \right){\left( {{{\bf{h}}_{{\rm{es}}}} + {{\bf{h}}_{{\rm{er}}}}} \right)^{\rm{H}}}{{\bf{w}}_{{\rm{bo}}}}.
\end{equation}
To find a deterministic bound for this quadratic term, we manipulate its expansion as $\left\| {{\bf{h}}_{\rm{e}}^{\rm{H}}{{\bf{w}}_{{\rm{bo}}}}} \right\|_2^2 = \left\| {{\bf{h}}_{{\rm{es}}}^{\rm{H}}{{\bf{w}}_{{\rm{bo}}}} + {\bf{h}}_{{\rm{er}}}^{\rm{H}}{{\bf{w}}_{{\rm{bo}}}}} \right\|_2^2$.
	By applying the triangle inequality ${\left\| {{\bf{a}} + {\bf{b}}} \right\|_2} \le {\left\| {\bf{a}} \right\|_2} + {\left\| {\bf{b}} \right\|_2}$ and the property $\left\| {{\bf{a}} + {\bf{b}}} \right\|_2^2 \le \left\| {\bf{a}} \right\|_2^2 + \left\| {\bf{b}} \right\|_2^2 + 2{\left\| {\bf{a}} \right\|_2}{\left\| {\bf{b}} \right\|_2}$, we can derive the corresponding upper and lower bounds. A standard and equivalent approach is to bound the perturbed quadratic form ${\bf{w}}_{{\rm{bo}}}^{\rm{H}}\left( {{{\bf{h}}_{{\rm{es}}}}{\bf{h}}_{{\rm{es}}}^{\rm{H}} + {{\bf{\Delta }}_{\rm{h}}}} \right){{\bf{w}}_{{\rm{bo}}}}$, where ${{\bf{\Delta }}_{\rm{h}}} = {{\bf{h}}_{{\rm{es}}}}{\bf{h}}_{{\rm{er}}}^{\rm{H}} + {{\bf{h}}_{{\rm{er}}}}{\bf{h}}_{{\rm{es}}}^{\rm{H}} + {{\bf{h}}_{{\rm{er}}}}{\bf{h}}_{{\rm{er}}}^{\rm{H}}$.
	The spectral norm of the perturbation matrix ${{\bf{\Delta }}_{\rm{h}}}$ is bounded using the triangle inequality for matrices and the sub-multiplicative property ${\left\| {{{\bf{\Delta }}_{\rm{h}}}} \right\|_2} \le 2{{{e}}_{\rm{h}}}{\left\| {{{\bf{h}}_{{\rm{es}}}}} \right\|_2} + {{e}}_{\rm{h}}^2 = {{{e}}_{{\rm{h,UB}}}}$. This bound ${{{e}}_{{\rm{h,UB}}}}$ is the one used in the manuscript to define the extreme matrices. 
	Similarly, the above derivations regarding ${{{\bf{\Delta }}_{\rm{h}}}}$ and ${{{e}}_{{\rm{h,UB}}}}$ are also applicable to ${{{\bf{\Delta }}_{\rm{g}}}}$ and ${{{e}}_{{\rm{g,UB}}}}$.

\section{Proof of Proposition 1}
Based on (\ref{defi_RS}), (\ref{P1}\rm e) can be rewritten as
\begin{equation}\label{rate_to_SINR}
\begin{array}{l} \displaystyle
	\frac{{{\rm{Tr}}\left( {{{\bf{H}}_{{\rm{bo,min}}}}{{\bf{W}}_{{\rm{bo}}}}} \right)}}{{{\rm{Tr}}\left( {{{\bf{G}}_{{\rm{bo,max}}}}{{\bf{O}}_{{\rm{c,}}n}}} \right) + {\sigma ^2}}} - \left( {\tau  - 1} \right) \ge \\ \displaystyle
	\tau \frac{{{\rm{Tr}}\left( {{{\bf{H}}_{\rm{e}}}{{\bf{W}}_{{\rm{bo}}}}} \right)}}{{{\rm{Tr}}\left( {{{\bf{H}}_{\rm{e}}}{{\bf{W}}_{{\rm{ta}}}}} \right) + {\rm{Tr}}\left( {{{\bf{G}}_{\rm{e}}}{{\bf{O}}_{{\rm{c,}}n}}} \right) + {\sigma ^2}}}.
\end{array}
\end{equation}
Then, we transform (\ref{rate_to_SINR}) into a LMI as
\begin{equation}\label{non-LMI}
\begin{array}{*{20}{l}} \displaystyle
	{{\rm{Tr}}\left( {{{\bf{H}}_{{\rm{bo,min}}}}{{\bf{W}}_{{\rm{bo}}}}} \right)\left[ {{\rm{Tr}}\left( {{{\bf{H}}_{\rm{e}}}{{\bf{W}}_{{\rm{ta}}}}} \right) + {\sigma ^2} + {\rm{Tr}}\left( {{{\bf{G}}_{\rm{e}}}{{\bf{O}}_{{\rm{c,}}n}}} \right)} \right]}\\  \displaystyle
	{ - \left( {\tau  - 1} \right)\left[ {{\rm{Tr}}\left( {{{\bf{G}}_{{\rm{bo,max}}}}{{\bf{O}}_{{\rm{c,}}n}}} \right) + {\sigma ^2}} \right]}\\  \displaystyle
	{ \times \left[ {{\rm{Tr}}\left( {{{\bf{H}}_{\rm{e}}}{{\bf{W}}_{{\rm{ta}}}}} \right) + {\rm{Tr}}\left( {{{\bf{G}}_{\rm{e}}}{{\bf{O}}_{{\rm{c,}}n}}} \right) + {\sigma ^2}} \right]}\\  \displaystyle
	{ \ge \tau {\rm{Tr}}\left( {{{\bf{H}}_{\rm{e}}}{{\bf{W}}_{{\rm{bo}}}}} \right)\left[ {{\rm{Tr}}\left( {{{\bf{G}}_{{\rm{bo,max}}}}{{\bf{O}}_{{\rm{c,}}n}}} \right) + {\sigma ^2}} \right].}
\end{array}
\end{equation}
After some algebraic manipulations, we obtain and prove
\begin{equation}\label{LMI-1}
{U_1}{U_2} \ge \frac{\tau }{{\tau  - 1}}{\rm{Tr}}\left( {{{\bf{H}}_{{\rm{bo,min}}}}{{\bf{W}}_{{\rm{bo}}}}} \right){\rm{Tr}}\left( {{{\bf{H}}_{\rm{e}}}{{\bf{W}}_{{\rm{bo}}}}} \right).
\end{equation}

\section{Proof of Proposition 2}\label{pro3}
Based on Proposition 1, we hold that
\begin{equation}\label{pro3-prove-1}
	\begin{array}{l} \displaystyle
		{\left( {{U_1} + {U_2}} \right)^2} - {\left( { - {U_1} + {U_2}} \right)^2} = 4{U_1}{U_2}\\ \displaystyle
		\ge \frac{{4\tau }}{{\tau  - 1}}{\rm{Tr}}\left[ {{{\bf{H}}_{{\rm{bo,min}}}}{{\bf{W}}_{{\rm{bo}}}}} \right]{\rm{Tr}}\left[ {{{\bf{H}}_{\rm{e}}}{{\bf{W}}_{{\rm{bo}}}}} \right]\\ \displaystyle
		= {\left[ {2\sqrt {{\tau  \mathord{\left/
							{\vphantom {\tau  {\left( {\tau  - 1} \right)}}} \right.
							\kern-\nulldelimiterspace} {\left( {\tau  - 1} \right)}}} {\rm{Tr}}\left( {{{\bf{h}}_{{\rm{bo}}}}{\bf{h}}_{\rm{e}}^{\rm{H}}{{\bf{W}}_{{\rm{bo}}}}} \right)} \right]^2},
	\end{array}
\end{equation}
or equivalently
\begin{equation}\label{pro3-prove-2}
	\begin{array}{l}
		{U_1} + {U_2} \ge \\   \displaystyle
		\sqrt {{{\left[ {2\sqrt {{\tau  \mathord{\left/
									{\vphantom {\tau  {\left( {\tau  - 1} \right)}}} \right.
									\kern-\nulldelimiterspace} {\left( {\tau  - 1} \right)}}} {\rm{Tr}}\left( {{{\bf{h}}_{{\rm{bo}}}}{\bf{h}}_{\rm{e}}^{\rm{H}}{{\bf{W}}_{{\rm{bo}}}}} \right)} \right]}^2} + {{\left( { - {U_1} + {U_2}} \right)}^2}}.
	\end{array}
\end{equation}
According to the definition of \emph{second order cone} (SOC) and (\ref{pro3-prove-2}), we obtain (\ref{pro3-1}). This concludes proof of Proposition \ref{pro_re2}.

\section{Proof of Proposition \ref{pro_re_3}}
According to (\ref{PB_C1}) and (\ref{pro3-prove-2}), we get
\begin{equation}\label{pro4-prove-1}
	\begin{array}{l} \displaystyle
		{\rm{Tr}}\left( {{{\bf{h}}_{{\rm{bo}}}}{\bf{h}}_{\rm{e}}^{\rm{H}}{{\bf{W}}_{{\rm{bo}}}}} \right) \le {\rm{Tr}}\left[ {\left( {{{\bf{h}}_{{\rm{bo}}}}{\bf{h}}_{{\rm{es}}}^{\rm{H}} + {{\bf{h}}_{{\rm{bo}}}}{\bf{h}}_{{\rm{er}}}^{\rm{H}}} \right){{\bf{W}}_{{\rm{bo}}}}} \right]\\ \displaystyle
		\le {\rm{Tr}}\left[ {\left( {{{\bf{h}}_{{\rm{bo}}}}{\bf{h}}_{{\rm{es}}}^{\rm{H}} + {{\rm{e}}_{{\rm{h,UB}}}}{{\bf{h}}_{{\rm{bo}}}}{{\bf{I}}_{1 \times {M_1}}}} \right){{\bf{W}}_{{\rm{bo}}}}} \right]\\ \displaystyle
		\le {\rm{Tr}}\left[ {\left( {{{\bf{h}}_{{\rm{bo}}}}{\bf{h}}_{{\rm{es}}}^{\rm{H}} + {{\rm{e}}_{{\rm{h,UB}}}}{{\bf{I}}_{{M_1} \times {M_1}}}} \right){{\bf{W}}_{{\rm{bo}}}}} \right] = {\rm{Tr}}\left( {{\bf{\tilde H}}{{\bf{W}}_{{\rm{bo}}}}} \right).
	\end{array}
\end{equation}
To ensure that the constraint on the security rate in (\ref{pro3-1}) is always satisfied, based on Proposition 2, (\ref{range_UB}), (\ref{range_LB}), and (\ref{pro4-prove-1}), (\ref{pro3-1}) can be rewritten as
\begin{equation}\label{pro4-prove-2}
	\begin{array}{l}
		{U_1} + {U_{2,\rm LB}} \ge \\   \displaystyle
		\sqrt {{{\left[ {2\sqrt {{\tau  \mathord{\left/
									{\vphantom {\tau  {\left( {\tau  - 1} \right)}}} \right.
									\kern-\nulldelimiterspace} {\left( {\tau  - 1} \right)}}} {\rm{Tr}}\left( {{\bf{\tilde H}}{{\bf{W}}_{{\rm{bo}}}}} \right)} \right]}^2} + {{\left( { - {U_1} + {U_{2,\rm UB}}} \right)}^2}}.
	\end{array}
\end{equation}
Afterwards, based on the definition of the SOC, we can get (\ref{pro4-1}). This concludes the proof.

\end{appendices}

\bibliographystyle{IEEEtran}
\bibliography{ref_PR-ISAC_CRN}

\end{document}